\theoremstyle{plain}
\newtheorem{stam}{STAM}
\newtheorem{lem}[stam]{Lemma}
\newtheorem{thm}[stam]{Theorem}
\newtheorem{claim}[stam]{Claim}
\newtheorem*{lem*}{Lemma}
\newtheorem*{thm*}{Theorem}
\newtheorem*{prop*}{Proposition}
\newtheorem*{claim*}{Claim}
\newtheorem*{cor*}{Corollary}
\newtheorem*{conj*}{Conjecture}
\newtheorem*{obs*}{Observation}
\theoremstyle{definition}
\newtheorem*{definition*}{Definition}
\newtheorem*{notation*}{Notation}
\theoremstyle{remark}
\newtheorem*{rem*}{\textbf{Remark}}
\newtheorem*{exercise*}{\textbf{Exercise}}
\numberwithin{equation}{section}
\begin{document}
\title{The Lipschitz Constant of Perturbed Anonymous Games}
\author[1]{Ron Peretz}
\author[1]{Amnon Schreiber}
\affil[1]{Bar Ilan University, Israel}
\author[2]{Ernst Schulte-Geers}
\affil[2]{Federal Office for Information Security, Germany}

\maketitle
\begin{abstract}
The worst-case Lipschitz constant of an $n$-player $k$-action $\delta$-perturbed game, $\lambda(n,k,\delta)$, is given an explicit probabilistic description. In the case of $k\geq 3$, $\lambda(n,k,\delta)$ is identified with the passage probability of a certain symmetric random walk on $\mathbb Z$. In the case of $k=2$ and $n$ even, $\lambda(n,2,\delta)$ is identified with the probability that two two i.i.d.\ Binomial random variables are equal. The remaining case, $k=2$ and $n$ odd, is bounded through the adjacent (even) values of $n$. Our characterisation implies a sharp closed form asymptotic estimate of $\lambda(n,k,\delta)$ as $\delta n /k\to\infty$.
\end{abstract}
\section{Introduction}
The Lipschitz constant of a game measures the maximal amount of influence that one player has on the payoff of some other player. Identifying classes of games that admit a small Lipschitz constant is important due to the stability and robustness of their equilibria \cite{kalai04,azrieli-shmaya13}. The Lipschitz constant  is given an explicit description in the class of  perturbed anonymous games (see Theorems~\ref{theorem: k=3},  \ref{theorem: k=2}, and \ref{theorem:main}).

Schmeidler \cite{schmeidler73} taught us that games with a continuum of anonymous players always admit a Nash equilibrium in pure strategies. Since a continuum of players is an idealisation a large finite set of players, it is reasonable to believe that large finite anonymous games should admit an approximation of a pure Nash equilibrium of some sort. Of what sort and how fast this approximation emerges (as the number of players grows)? These questions are given precise answers in Theorem~\ref{theorem: approximate nash}.

Before explaining our notion of approximation let us start with a na\"ive attempt: the notion of a pure approximate  ($\epsilon$-) Nash equilibrium. Perhaps every large enough anonymous game admits an approximate Nash equilibrium. Well, let's see why not. Consider a game in which the players are people who decide whether to go to a party or not. For some reason some of the  people prefer parties with an even number of participants while others prefer an odd number. This game is anonymous, since the players don't care about the identity of the party participants but only about their number. Alas, this game does not admit any pure Nash equilibrium, not even an approximate one, regardless of the number of players. The instability of this game stems from the persistence of its Lipschitz constant. The influence of a single player on another player's payoff remains the same regardless of the number of players.

The notion that does do the trick is that of an approximate Nash equilibrium in \emph{perturbed} pure strategies. A perturbed pure strategy is deviation from a pure strategy to the uniformly mixed strategy  with some (small $\delta>0$) probability. Assume all of the players in our example play perturbed pure strategies. It is now clear that the size of the game matters. When the number of players is small it is likely that none of the players will play randomly and therefore there is no approximate equilibrium. However, as the number of players grows, it becomes more and more likely that at least one of the players will randomise and therefore all players become almost indifferent between going to the party or not; and therefore a pure approximate Nash equilibrium exists (in fact, any perturbed pure strategy profile will constitute an approximate Nash equilibrium). 

The trick of perturbing all players' actions works for anonymous games generally. The rigorous explanation relies on analysis of the Lipschitz constant of the perturbed game. The accumulative effect of many small perturbation is the reduction of the Lipschitz constant the game and, thus, the emergence of a pure approximate Nash equilibrium, which translates to a perturbed pure approximate Nash  equilibrium in the original (unperturbed) game.

Given parameters $n$, $k$, and $\delta$, we give an explicit expression for the worst-case (largest) Lipschitz constant of any $n$-player $k$-action $\delta$-perturbed anonymous game. The expression is given in terms of a symmetric random walk on the integers. For $k\geq 3$, the expression is the tail probability of the first passage time (from 0 to 1). For $k=2$ and $n$ is even, the expression is the probability that a certain random walk lands at $0$ at time $n/2 -1$. When $n$ is odd, we don't have an exact expression, only upper- and lower-bounds that use the adjacent (even) values of $n$.

The Lipschitz constant of perturbed anonymous games has algorithmic applications, as well. Goldberg and Turchetta \cite{goldberg-turchetta17} presented an efficient algorithm for computing approximate Nash equilibrium in n-player 2-action anonymous games. Their algorithm relies on the existence of an approximate equilibrium that uses perturbed pure strategies (each action gets replaced by the uniformly mixed strategy with some small positive probability $\delta$). The existence of such an equilibrium is guaranteed (due to Azrieli and Shmaya~\cite{azrieli-shmaya13}) since perturbed anonymous games admit a small Lipschitz constant. The premise of the method of Goldberg and Turchetta~\cite{goldberg-turchetta17} depends on how tightly one estimates the Lipschitz constant of the perturbed game. Goldberg and Turchetta~\cite{goldberg-turchetta17} obtained an inverse polynomial upper-bound (in $n$, the number of players, assuming 2 actions for each player) which enabled them to prove that their algorithm was polynomial. Cheng et al.~\cite{cheng-et-al17}, improved the upper-bound and extended it to any number of actions, $k$, showing that the Lipschitz constant is $\mathcal{\tilde O}\left(\sqrt{k^9(\delta n)^{-1}}\right)$. We provide an asymptotically sharp approximation for the worst-case Lipschitz constant $\lambda=\lambda(n,k,\delta)$ by identifying it with a passage time of a certain symmetric random walk on $\mathbb Z$. For example, our characterisation implies that $\lambda= \mathcal O\left(\sqrt{k(\delta n)^{-1}}\right)$, as $\delta+k(\delta n)^{-1}\to 0$.

\section{Definitions and results}
\subsection{Lipschitz constant}
An $n$-player $k$-action game is a function $g\colon [k]^n\to[0,1]^n$. Following Azrieli and Shmaya~\cite{azrieli-shmaya13}, the Lipschitz constant of a game is the maximal change in some players payoff when a single opponent
changes his strategy. 

Formally, the Hamming distance between two pure strategy profiles $a,b\in[k]^n$ is defined as
\[
\rho(a,b)=|\{i\in[n]:a_i\neq b_i\}|.
\]
The Lipschitz constant of $g$ is defined as
\[
\lambda(g)=\max |g_{i}(a)-g_i(b)|,
\]
where the maximum is over all $i\in[n]$ and $a,b\in[k]^n$ such that $a_i=b_i$ and $\rho(a,b)=1$.

\subsection{Perturbation}
For $0< \delta < 1$, the $\delta$-perturbation of a strategy $a_i\in[k]$ is the following mixture of $a_i$ and the Uniform distribution $u\sim \mathrm{Uniform}([k])$,
\[
a_i^\delta=(1-\delta)a_i + \delta u.
\]
The $\delta$-perturbation of $g$ is the game $g^\delta\colon[k]^n\to[0,1]^n$ defined by
\[
g^\delta(a_1,\ldots,a_n)=E\left[g(a^\delta_1,\ldots,a^\delta_n)\right].
\]

\subsection{Anonymous games}
A game $g$ is called \emph{anonymous} if, for every $i\in[n]$, $g_i(\cdot)$ is a function of $i$'s own action and the number of other players who take each action $j\in[k]$. Formally, $g$ is anonymous if $g_i(a)=g_i(b)$, for every $i\in[n]$ and every $a,b\in[k]^n$ such that $a_i=b_i$ and $|\{i'\in[n]:a_{i'}=j\}| =|\{i'\in[n]:b_{i'}=j\}|$, for every $j\in[k]$.
\subsection{Symmetric random walk on the integers}
The statement of our first result uses the notion of a symmetric random walk on $\mathbb Z$ with (stationary) rate $r$, which is a sequence of random variables, $S^r_0,S^r_1,\ldots$, whose law is defined by
\begin{align*}
    &P(S^r_0=0)=1,\\
    &P(S_{n+1}^r-S_n^r=0|S_n^r)=1-r,\\
    &P(S_{n+1}^r-S_n^r=+1|S_n^r)=P(S_{n+1}^r-S_n^r=-1|S_n^r)=\tfrac r 2.
\end{align*}

\subsection{Our results}
Our objective is to characterise the  worst-case Lipschitz constant of anonymous games defined by
\[
\lambda(n,k,\delta)=\max \lambda(g^\delta),
\]
where the maximum is over all $n$-player $k$-action anonymous games.

For games with $k\geq 3$ actions we obtain the following characterisation.
\begin{thm}
\label{theorem: k=3}
For every $n\geq 2$, $k\geq 3$, and $\delta\in(0,1)$,
\[
\lambda(n,k,\delta)=(1-\delta)P(S_{n-2}^{2\delta/k}\in\{0,1\}).
\]
\end{thm}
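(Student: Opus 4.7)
The plan is to reduce $\lambda(n,k,\delta)$ to a maximum of total variation distances, obtain an upper bound on that TV via a coupling, and match it at the profile with $m_1 = m_2 = 0$. For the reduction, fix player $i$ and profiles $a, b \in [k]^n$ differing only at a single position $j \ne i$, WLOG $a_j = 1, b_j = 2$. Under the perturbation coupling in which each player $\ell$ draws $Z_\ell \sim \mathrm{Ber}(\delta)$ and $U_\ell \sim \mathrm{Uniform}([k])$ independently and sets $X_\ell = a_\ell$ on $\{Z_\ell = 0\}$ and $X_\ell = U_\ell$ otherwise, conditioning on $Z_j$ pulls out a factor $(1-\delta)$: the $\{Z_j = 1\}$-contribution vanishes, and the remaining difference equals $(1-\delta)\,E[h_i(X_i, M + e_1) - h_i(X_i, M + e_2)]$, where $h_i$ is the anonymous representation of $g_i$, $e_c$ is the $c$-th standard basis vector, and $M$ is the count vector of the perturbed actions of the $n - 2$ players other than $i, j$. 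Taking the supremum over games, $h_i$ ranges over all $[0,1]$-valued functions of the count vector, and the standard dual characterisation of TV identifies the supremum as $(1-\delta)\,\mathrm{TV}(M + e_1, M + e_2)$; hence $\lambda(n,k,\delta) = (1-\delta)\max_m \mathrm{TV}(M_m + e_1, M_m + e_2)$, the maximum being over opponent count vectors $m$ summing to $n - 2$.

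For the upper bound, continue the coupling and write $M = D(Z) + V$, where $D(Z) = \sum_{\ell \in L_0} e_{a_\ell}$ is the (conditionally) deterministic contribution of the unperturbed set $L_0 = \{\ell : Z_\ell = 0\}$, and $V = \sum_{\ell \in L_1} e_{U_\ell}$ satisfies $V \mid |L_1| \sim \mathrm{Multinomial}(|L_1|, (1/k, \ldots, 1/k))$ independently of the profile. Translation invariance of TV removes $D(Z)$, and the triangle inequality over $Z$ gives $\mathrm{TV}(M + e_1, M + e_2) \le E_Z[\mathrm{TV}(V + e_1, V + e_2 \mid Z)]$, which depends on $Z$ only through $|L_1|$. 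Conditioning further on $(V_3, \ldots, V_k)$, the law of $V_1$ is the unimodal $\mathrm{Bin}(V_1 + V_2, 1/2)$; the telescoping identity $\sum_j |p(j-1) - p(j)| = 2\max_j p$ for unimodal PMFs $p$ evaluates this inner TV as $E[\binom{V_1 + V_2}{\floor{(V_1+V_2)/2}} 2^{-(V_1+V_2)} \mid |L_1|]$. Since $V_1 + V_2 \mid |L_1| \sim \mathrm{Bin}(|L_1|, 2/k)$ and $|L_1| \sim \mathrm{Bin}(n-2, \delta)$, binomial thinning gives $V_1 + V_2 \sim \mathrm{Bin}(n-2, 2\delta/k)$, and decomposing each random-walk step into a $\mathrm{Ber}(2\delta/k)$-nonzero indicator times a fair sign identifies the unconditional expectation with $P(S^{2\delta/k}_{n-2} \in \{0, 1\})$.

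For the lower bound, take the profile $m_1 = m_2 = 0$ and compute directly. Every $a_\ell \ge 3$, so $P(X_\ell = 1 \mid X_\ell \in \{1, 2\}) = 1/2$, and by exchangeability $M_1 \mid \bar M \sim \mathrm{Bin}(s, 1/2)$ where $\bar M = (M_3, \ldots, M_k)$ and $s = n - 2 - \sum_{c \ge 3} \bar M_c \sim \mathrm{Bin}(n - 2, 2\delta/k)$. The same telescoping identity gives $\mathrm{TV}(M + e_1, M + e_2) = E[\binom{s}{\floor{s/2}} 2^{-s}] = P(S^{2\delta/k}_{n-2} \in \{0, 1\})$, matching the upper bound. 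The step I expect to be most delicate is the unimodal telescoping identity itself: the boundary terms of the telescoping sum must be matched correctly across the offset $(X_1, X_2)$-slices supporting $X + e_1$ and $X + e_2$, although the rest of the argument is essentially bookkeeping with binomial thinning.
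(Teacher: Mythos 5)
Your proof is correct, and it follows a genuinely different route from the paper. The paper obtains the upper bound via a \emph{coupling} argument: it constructs $(X'_i)$ that mirror $(X_i)$ between actions $1$ and $2$ while the two count processes disagree, observes that the difference $(Z_i)_1-(Z'_i)_1$ behaves as a rate-$\tfrac{2\delta}{k}$ symmetric random walk absorbed at coalescence, and then invokes the reflection principle (their Lemma~\ref{lemma: reflection principle}) to turn the non-coalescence probability into $P(S^{2\delta/k}_{n}\in\{0,1\})$. You instead compute the total variation distance directly: condition on the perturbation indicators $Z$, use translation invariance to discard the deterministic part $D(Z)$, invoke convexity of TV over the $Z$-mixture, decompose exactly over $(V_3,\dots,V_k)$ (valid because those marginals and the total agree under $V+e_1$ and $V+e_2$), evaluate the inner one-dimensional TV via unimodality of $\mathrm{Bin}(s,1/2)$, and finally recover the random-walk expression through the chain of binomial thinnings $|L_1|\to V_1+V_2$ and the observation that $P(S^r_n\in\{0,1\})=E\bigl[\binom{S}{\lfloor S/2\rfloor}2^{-S}\bigr]$ with $S\sim\mathrm{Bin}(n,r)$. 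Your lower bound at the profile with $m_1=m_2=0$ is the same kind of direct computation and in fact gives equality there, where the paper's lower bound (via the profile $\bar{3}$ and a one-sided data-processing estimate) only gives $\geq$; both suffice by the squeeze. The paper's coupling is perhaps more illuminating about \emph{why} a random walk appears, and its reflection lemma is reused elsewhere; your route is more elementary and relies only on the unimodality of Poisson--Binomial laws (which the paper anyway needs in the $k=2$ case), so it unifies the $k\geq 3$ and $k=2$ computations under one technique. The one step you flagged as delicate — matching boundary terms across the $(V_1,V_2)$-slices of $V+e_1$ and $V+e_2$ — does work out, precisely because both slices live on the same hyperplane $v_1+v_2=s+1$, so the telescoping sum $\sum_{v_1}|P(V_1=v_1-1)-P(V_1=v_1)|=2\max_j P(V_1=j)$ is applied to a single unimodal PMF without any offset mismatch.
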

For games with two actions we have an exact characterisation when the number of players is even and an estimation when it is odd.
\begin{thm}
\label{theorem: k=2}
For every $n\in\mathbb N$, and $\delta\in(0,1)$ let us abbreviate $\lambda_n=\lambda(n,2,\delta)$. Then,
\[
\lambda_{2n}=(1-\delta)P\left(S_{n-1}^{\delta(1-\delta/2)}=0\right),
\]
and
\[
\lambda_{2n+1}\in\left[\lambda_{2n+2},\sqrt{\lambda_{2n}\lambda_{2n+2}}\right].
\]
\end{thm}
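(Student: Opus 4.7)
The plan is to reduce the Lipschitz constant to the peak probability of a difference of binomials, then extract both halves of the theorem via Fourier inversion on the unit circle. Following the route underlying Theorem~\ref{theorem: k=3}, I fix player $i$, a switcher $j\neq i$, and the nominal actions of the remaining $N-2$ opponents (where $N$ is the number of players). A direct computation shows that toggling $j$ between action $1$ and action $2$ changes $i$'s perturbed payoff by $\pm(1-\delta)\,E[h(Y+1)-h(Y)]$, where $h(c):=g_i(a_i,c)$ and $Y$ is the sum of $N-2$ independent Bernoullis whose parameters lie in $\{\delta/2,1-\delta/2\}$. Maximising over $h\in[0,1]^N$ is a telescoping LP whose value equals the total positive variation of the PMF of $Y$; since $Y$ is a convolution of Bernoullis it is log-concave, hence unimodal, so the positive variation collapses to the peak $\max_m P(Y=m)$. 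Letting $s$ count opponents with parameter $1-\delta/2$ and $t=N-2-s$, reflecting the ``$\delta/2$'' coordinates ($X_{j'}\mapsto 1-X_{j'}$) identifies $Y-t$ in distribution with $U-V$, where $U\sim\mathrm{Bin}(s,1-\delta/2)$ and $V\sim\mathrm{Bin}(t,1-\delta/2)$ are independent, so $\lambda_N=(1-\delta)\max_s\max_k P(U-V=k)$.

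\textbf{Random walk identification and even case.} Set $p=1-\delta/2$ and $r=2p(1-p)=\delta(1-\delta/2)$. When $s=t=\ell$, each increment of $U-V$ takes values $+1,0,-1$ with probabilities $p(1-p),1-2p(1-p),p(1-p)$, which is exactly the stationary step of $S^r$, so $U-V\stackrel{d}{=}S_\ell^r$ and $P(U=V)=P(S_\ell^r=0)$. For general $s,t$, Fourier inversion on $[-\pi,\pi]$ gives
\[
\max_k P(U-V=k)\le\frac{1}{2\pi}\int_{-\pi}^{\pi}\bigl(1-r(1-\cos\omega)\bigr)^{(s+t)/2}\,d\omega.
\]
For $N=2n$ we have $s+t=2(n-1)$, and the right-hand side equals $P(S_{n-1}^r=0)$; this bound is attained at $s=t=n-1$, $k=0$, yielding $\lambda_{2n}=(1-\delta)P(S_{n-1}^r=0)$.

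\textbf{Odd case.} For $N=2n+1$, $s+t=2n-1$, and the Fourier estimate has half-integer exponent $(2n-1)/2=\tfrac12(n-1)+\tfrac12 n$. Cauchy-Schwarz in $L^2(d\omega)$ gives
\[
\frac{1}{2\pi}\int_{-\pi}^{\pi}\bigl(1-r(1-\cos\omega)\bigr)^{(2n-1)/2}\,d\omega\le\sqrt{P(S_{n-1}^r=0)\cdot P(S_n^r=0)},
\]
so $\lambda_{2n+1}\le\sqrt{\lambda_{2n}\lambda_{2n+2}}$. For the lower bound $\lambda_{2n+1}\ge\lambda_{2n+2}$, I would establish the general monotonicity $\lambda_{N+1}\le\lambda_N$: decomposing $Y_{N+1}=Y_N+X$ where $X\sim\mathrm{Ber}(q)$ is the extra opponent expresses $E[\Delta h(Y_{N+1})]$ as a convex combination of $E[\Delta h^{(0)}(Y_N)]$ and $E[\Delta h^{(1)}(Y_N)]$, where $h^{(j)}(y):=h(y+j)$ are admissible $N$-player payoffs, and the triangle inequality closes the bound.

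\textbf{Main obstacle.} The step I expect to be the main obstacle is the distributional identity $U-V\stackrel{d}{=}S_\ell^r$ in the balanced case: recognising the Bernoulli-difference increment as the stationary step of $S^r$ with rate $r=\delta(1-\delta/2)$ is what simultaneously furnishes the exact formula for even $N$ and supplies the geometric mean endpoints $P(S_{n-1}^r=0)$, $P(S_n^r=0)$ for the odd case. The log-concavity collapse of the $h$-maximisation to the peak of $Y$ is a second subtle point---without unimodality the $h$-optimisation would overshoot and the Fourier bound would fail to be saturated in the even case.
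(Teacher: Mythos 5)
Your proof is correct and reaches the same two endpoints, but your machinery for bounding the peak probability is genuinely different in presentation from the paper's. The paper works on the ``physical'' side: it writes the target probability as a convolution $\sum_t P(Y=t)P(Z=s-n-t)$, applies Cauchy--Schwarz there, and then shows that $\sum_t P(X=t)^2$ is invariant under sign flips of the increments via the clever identity $P(X+Y=X'+Y')=P(X-Y=X'-Y')$, which pins it to $P_m$. You instead pass to the characteristic function, compute $|1-p+pe^{i\omega}|^2=1-r(1-\cos\omega)$ so the modulus of $\widehat{U-V}$ depends only on $s+t$, bound $\max_k P(U-V=k)$ by $\frac{1}{2\pi}\int|\widehat{U-V}|\,d\omega$, and then handle the odd case by Cauchy--Schwarz in $L^2(d\omega/2\pi)$. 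The two routes are Plancherel duals of one another: the paper's $\sum_t P(X=t)^2$ is your $\frac{1}{2\pi}\int|\widehat X|^2\,d\omega$, and its sign-flip invariance is the fact that $|\widehat{X}|$ ignores the sign of $X$. What the Fourier route buys you is that the even case becomes almost automatic---the bound $\frac{1}{2\pi}\int(1-r(1-\cos\omega))^{(s+t)/2}d\omega$ is manifestly independent of $(s,t)$ once $s+t$ is fixed and is saturated at $s=t$, $k=0$, whereas the paper has to argue monotonicity of $M$ and then apply the Cauchy--Schwarz bound separately. Your reduction to $\max_m P(Y=m)$ via unimodality of a Bernoulli convolution and telescoping of the LP is the same content as the paper's invocation of Lemma~\ref{lemma: X X+1 distance} (total-variation distance between $X$ and $X+1$ for Poisson Binomial $X$), and your monotonicity argument $\lambda_{N+1}\le\lambda_N$ by conditioning on the extra opponent's realisation mirrors the paper's proof that $M(n,\delta)$ is nonincreasing in $n$.
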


We obtain the following asymptotically sharp approximation for the case that $n$ is large relative to $k$ and $\delta^{-1}$.
\begin{thm}\label{theorem:main}
For $k\geq 3$, 
\[
\lim_{\frac{n\delta}{k}\to \infty}(1-\delta)^{-1}\sqrt{\frac{\pi n\delta}{k}}\ \times \ \lambda(n,k,\delta)=1.
\]
For $k=2$,
\[
\lim_{{n\delta}\to \infty}(1-\delta)^{-1}\sqrt{\pi n\delta(1-\delta/2)}\ \times \ \lambda(n,2,\delta)=1.
\]
\end{thm}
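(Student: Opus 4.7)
The plan is to feed the closed forms from Theorems~\ref{theorem: k=3} and~\ref{theorem: k=2} into a local central limit theorem (LCLT) for the walk $S_m^r$. Its increment takes values $+1,0,-1$ with probabilities $r/2,1-r,r/2$, giving mean $0$, variance $r$, lattice span $1$ (so the walk is aperiodic for $r<1$), and characteristic function $\phi_r(t)=(1-r)+r\cos t$. In both asymptotic regimes of the theorem the rate $r$ stays in a compact subset of $(0,1)$: for $k\geq 3$ one has $r=2\delta/k\leq 2/3$, and for $k=2$ one has $r=\delta(1-\delta/2)\leq 1/2$. Moreover, the hypothesis $n\delta/k\to\infty$ (resp.\ $n\delta\to\infty$) is precisely the condition $mr\to\infty$ for the relevant walk length $m=n-2$ (resp.\ $m=n-1$).

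First I would establish the LCLT
\[
P(S_m^r=x)=\frac{1}{\sqrt{2\pi mr}}\,e^{-x^2/(2mr)}\,(1+o(1))\qquad\text{as } mr\to\infty,
\]
uniformly in integers $x$ with $|x|=O(1)$ and in $r$ belonging to any compact subset of $(0,1)$. This is the standard Fourier inversion $P(S_m^r=x)=\tfrac{1}{2\pi}\int_{-\pi}^{\pi}e^{-itx}\phi_r(t)^m\,dt$: after the substitution $t=u/\sqrt{mr}$ restricted to a window $|u|\leq (mr)^{1/10}$, the Taylor expansion $\phi_r(t)=1-rt^2/2+O(rt^4)$ yields $\phi_r(u/\sqrt{mr})^m\to e^{-u^2/2}$, and the complement of the window is negligible since $|\phi_r(t)|\leq 1-c(1-\cos t)$ for a constant $c>0$ depending only on the compact range of $r$.

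With the LCLT in hand, the three cases fall out. For $k\geq 3$, Theorem~\ref{theorem: k=3} and the LCLT applied at $x\in\{0,1\}$, $m=n-2$, $r=2\delta/k$ give $P(S_{n-2}^{2\delta/k}\in\{0,1\})\sim 2\cdot(2\pi\cdot 2(n-2)\delta/k)^{-1/2}\sim(\pi n\delta/k)^{-1/2}$, and multiplying by $(1-\delta)^{-1}\sqrt{\pi n\delta/k}$ yields the limit $1$. For $k=2$ with $n$ even, Theorem~\ref{theorem: k=2} and the LCLT at $x=0$, $m=n-1$, $r=\delta(1-\delta/2)$ give $\lambda_{2n}\sim(1-\delta)(2\pi(n-1)r)^{-1/2}$, which rescales to $1$. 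For $k=2$ with $n$ odd, the sandwich $\lambda_{2n+1}\in[\lambda_{2n+2},\sqrt{\lambda_{2n}\lambda_{2n+2}}]$ from Theorem~\ref{theorem: k=2} squeezes the rescaled quantity between two expressions that both tend to $1$ by the even-$n$ case.

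The main technical obstacle is securing the LCLT with the required uniformity in the rate $r=r(n,k,\delta)$: one must control the Fourier integral with constants that do not blow up as $r\to 0$ (handled by the $1/\sqrt{mr}$ rescaling) or as $r\to 1$ (excluded from our range, which is compactly contained in $(0,1)$). Once this uniformity is established, the remaining computation is direct.
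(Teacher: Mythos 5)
Your approach is correct but genuinely different from the paper's. The paper never proves an LCLT from scratch; instead it observes that for $r\leq\tfrac12$ the walk $S_m^r$ is Poisson--Binomial (each increment can be written as a difference of two i.i.d.\ $\mathrm{Bernoulli}$ variables), and then invokes the quantitative normal approximation for Poisson--Binomial random variables due to Platnov (as packaged by Pitman, Lemma~\ref{lemma: normal approximation}), which hands over the local estimate with an explicit $\mathcal O(\eta^{-1})$ error. To cover $k=3$ (where $r=2\delta/k$ can be as large as $2/3>1/2$), the paper proves a second claim for all $r\leq 1$ by a randomisation trick: realise $S_n^r$ as $S_X^1$ with $X\sim\mathrm{Binomial}(n,r)$ independent of the $r=1$ walk, then control the fluctuations of $X$ with Chebyshev and Jensen. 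Your Fourier-inversion LCLT is more elementary and self-contained, and your observation that the relevant $r$ is always bounded away from $1$ (by $2/3$ for $k\geq 3$, by $1/2$ for $k=2$) lets you bypass the paper's two-case split and the $r\to 1$ periodicity issues entirely; the price is that you must establish the uniformity of the LCLT yourself rather than citing it, and you obtain only $o(1)$ rather than the paper's quantitative $\mathcal O(\eta^{-1})$ remainder --- sufficient for Theorem~\ref{theorem:main}, but weaker as a by-product.

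One small inconsistency in your write-up: you first state the LCLT ``uniformly \ldots in $r$ belonging to any compact subset of $(0,1)$'', but in the regime $n\delta/k\to\infty$ the rate $r$ can tend to $0$ (e.g.\ $\delta\to 0$ with $n$ growing fast), so $r$ is \emph{not} confined to a compact subset of $(0,1)$. Your closing paragraph effectively corrects this --- the rescaling $t=u/\sqrt{mr}$ and the bound $|\phi_r(t)|\leq\exp\bigl(-cr(1-\cos t)\bigr)$ on the tail do control the $r\to 0$ degeneracy as long as $mr\to\infty$ --- but the lemma statement should read ``uniformly over $0<r\leq r_0<1$ as $mr\to\infty$'', not ``$r$ compactly contained in $(0,1)$''. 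With that correction the argument is sound.
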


The following theorem says that anonymous games with a large number of players admit an approximate Nash equilibrium in perturbed pure strategies. 
\begin{thm}\label{theorem: approximate nash}
Every $n$-player $k$-action game admits an $\epsilon$-Nash equilibrium in $\delta$-perturbed pure strategies, whenever $\epsilon\geq\delta+2k\lambda(n,k,\delta)$.

Furthermore, there exist functions $\epsilon(n,k),\delta(n,k)=\mathcal O(kn^{-\frac 1 3})$, such that every $n$-player $k$-action game admits an $\epsilon(n,k)$-Nash equilibrium in $\delta(n,k)$-perturbed pure strategies.
\end{thm}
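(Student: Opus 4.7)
The plan is to construct a pure approximate Nash equilibrium of the perturbed game $g^\delta$ (exploiting its small Lipschitz constant), lift it back to the original game $g$ as a perturbed-pure-strategy profile, and finally optimise $\delta$ against the sharp asymptotic of Theorem~\ref{theorem:main}.

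For the main inequality I would first invoke Nash's theorem to pick a mixed equilibrium $x\in\prod_i\Delta([k])$ of $g^\delta$. Since $g$ is anonymous (which is the intended scope of the theorem, $\lambda(n,k,\delta)$ being defined via anonymous games), so is $g^\delta$, whence $\lambda(g^\delta)\le\lambda(n,k,\delta)$. The next step is to round $x$ to a pure profile $a\in[k]^n$ that is an $\epsilon'$-Nash equilibrium of $g^\delta$ with $\epsilon'\le 2k\,\lambda(n,k,\delta)$. Following Azrieli--Shmaya~\cite{azrieli-shmaya13} and the fact that in an anonymous game each payoff depends on the opponents only through their $k$-dimensional empirical count vector, a deterministic rounding of that count vector to an integer point within $\ell_\infty$-distance at most $1$ of its expectation incurs a change of only $\mathcal O(k\lambda)$ in each player's payoff.

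For the transfer to $g$, I would write $\tilde g_i(c,a_{-i})=E[g_i(c,a_{-i}^\delta)]$ and expand
\[
g_i^\delta(a)=(1-\delta)\,\tilde g_i(a_i,a_{-i})+\frac{\delta}{k}\sum_{c}\tilde g_i(c,a_{-i}),
\]
so that $g_i^\delta(c,a_{-i})-g_i^\delta(a)=(1-\delta)(\tilde g_i(c,a_{-i})-\tilde g_i(a_i,a_{-i}))$. When every player plays the perturbed pure strategy $a_j^\delta$ in $g$, player $i$'s payoff equals $g_i^\delta(a)$ and his best (pure) deviation yields $\max_c\tilde g_i(c,a_{-i})$; combining these identities with payoffs lying in $[0,1]$ shows that the deviation gain is at most $\epsilon'+\delta\le\delta+2k\lambda(n,k,\delta)$. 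The extra $\delta$ is precisely the cost of the player's own uniform randomisation inside the perturbed pure strategy.

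For the Furthermore assertion I would substitute the asymptotic $\lambda(n,k,\delta)\sim(1-\delta)\sqrt{k/(\pi n\delta)}$ from Theorem~\ref{theorem:main} into $\delta+2k\lambda(n,k,\delta)$, obtaining to leading order $f(\delta)\approx\delta+2\sqrt{k^3/(\pi n\delta)}$. Setting $f'(\delta)=0$ gives $\delta\asymp k(\pi n)^{-1/3}$, at which $f(\delta)\asymp 3k(\pi n)^{-1/3}=\mathcal O(kn^{-1/3})$; the $k=2$ case is analogous with the corresponding asymptotic. The main technical obstacle is the pure-rounding step: improving the rounding error from the $\mathcal O(\lambda\sqrt n)$ bound that a naive McDiarmid argument would give (over $n$ independent coordinates) to the $\mathcal O(k\lambda)$ needed here. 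This is possible because the randomness relevant to each player collapses to a $k$-dimensional count vector admitting controlled deterministic rounding, but working this out explicitly, and handling the case that the mixed equilibrium is not symmetric, is where the substance of the proof lies.
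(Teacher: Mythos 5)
Your skeleton matches the paper's exactly: (i) $g^\delta$ has Lipschitz constant at most $\lambda(n,k,\delta)$, so by the Azrieli--Shmaya purification result a pure $2k\lambda(n,k,\delta)$-Nash equilibrium $a$ of $g^\delta$ exists; (ii) replacing $a$ by the perturbed-pure profile $a^\delta$ in $g$ costs an additional $\delta$; (iii) the ``Furthermore'' follows by balancing $\delta$ against $2k\lambda(n,k,\delta)\sim 2\sqrt{k^3/(\pi n\delta)}$ using Theorem~\ref{theorem:main}, giving $\delta\asymp kn^{-1/3}$. Your transfer computation in step (ii) --- expanding $g_i^\delta(a)=(1-\delta)\tilde g_i(a_i,a_{-i})+\tfrac{\delta}{k}\sum_c\tilde g_i(c,a_{-i})$ and bounding the deviation gain by $\epsilon'+\delta$ --- is precisely the paper's (terse) observation that ``every $\epsilon$-Nash equilibrium in $g^\delta$ is a $(\delta+\epsilon)$-Nash in $g$.'' Your optimisation in step (iii) is actually spelled out more carefully than the paper's one-liner.

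The one place where your write-up stops short is step (i): you sketch a rounding-of-the-count-vector argument, correctly identify that the crux is getting the error down from a naive $\mathcal O(\lambda\sqrt n)$ to $\mathcal O(k\lambda)$, and then flag this as an unresolved technical obstacle. The paper does not attempt this either; it simply invokes Theorem 6.1 of Azrieli and Shmaya~\cite{azrieli-shmaya13} (any $\lambda$-Lipschitz $n$-player $k$-action game admits a $2k\lambda$-Nash equilibrium in pure strategies) as a black box. So if your intention was to \emph{cite} that theorem --- as your own ``Following Azrieli--Shmaya'' suggests --- your proof is complete and coincides with the paper's; if your intention was to reprove it, there is a genuine gap, as you yourself acknowledge, and the missing content is exactly their purification lemma for Lipschitz games. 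Either way, you have correctly located the load-bearing external result and the exact role it plays.

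%% End of referee comment on Theorem \ref{theorem: approximate nash}.
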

\section{Preliminaries}
\subsection{The reflection principle}
A symmetric random walk on $\mathbb Z$ is a sequence of random variables, $S_1,S_2\ldots$, such that the increments $I_i:=S_i-S_{i-1}$ (where $S_0:=0$) satisfy
\begin{itemize}
    \item $I_1,I_2,\ldots\in\{0,1,-1\}$,
    \item $I_1,I_2,\ldots$ are mutually independent,
    \item $E[I_i]=0$, for all $i$.
\end{itemize}
We will use the following property of symmetric random walks.\footnote{The reflection principle has become folklore in the theory of random walks. It is often attributed to the French Mathematician D\'esir\'e Andr\'e, who has used it slightly differently than the way we do here. Lemma~\ref{lemma: reflection principle} is very similar to Lemma 3.3.1 in \cite[p. 76]{feller}.} 
\begin{lem}[Reflection Principle]\label{lemma: reflection principle}
Let $S_1,\ldots,S_n$ be a symmetric random walk on $\mathbb Z$, then \[
P(S_1<1,\ldots,S_n< 1)=P(S_n\in\{0,1\}).
\]
\end{lem}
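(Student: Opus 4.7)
The plan is to recast the event $\{S_1 < 1,\ldots,S_n < 1\}$ in terms of the first hitting time $\tau := \min\{j \geq 1 : S_j = 1\}$ (with $\tau = \infty$ if the walk never reaches $1$), and then evaluate $P(\tau > n)$ by a reflection bijection combined with the distributional symmetry of $S_n$. Since the increments lie in $\{-1,0,+1\}$, the event $\{S_n \geq 1\}$ already forces $\tau \leq n$, so I would start from the decomposition
\[
1 \;=\; P(\tau > n) \;+\; P(\tau \leq n,\; S_n \leq 0) \;+\; P(S_n \geq 1).
\]

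The heart of the argument is the identity $P(\tau \leq n,\ S_n \leq 0) = P(S_n \geq 2)$, which I would obtain by constructing the involutive sign-flip map
\[
(I_1,\ldots,I_n) \;\longmapsto\; (I_1,\ldots,I_\tau,-I_{\tau+1},\ldots,-I_n),
\]
defined on sample paths with $\tau \leq n$. This map sends a path ending at $S_n \leq 0$ to one ending at $2-S_n \geq 2$, and conversely every path with $S_n \geq 2$ must have crossed $1$ (since increments are bounded by $1$) and hence has $\tau \leq n$. The hypothesis $I_i \in \{-1,0,1\}$ together with $E[I_i] = 0$ forces $P(I_i = +1) = P(I_i = -1)$, so each $I_i$ has a law symmetric about $0$; combined with independence of the increments (applied conditionally on the value of the stopping time $\tau$) this shows the bijection is probability-preserving.

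Finally, symmetry of each $I_i$ implies that $S_n$ itself has a law symmetric about $0$, giving $P(S_n \geq 2) = P(S_n \leq -2)$ and $P(S_n = 1) = P(S_n = -1)$. Substituting the reflection identity into the first display and writing $P(S_n \geq 1) = P(S_n = 1) + P(S_n \geq 2)$, the terms collapse through the normalization $1 = 2P(S_n \geq 2) + 2P(S_n = 1) + P(S_n = 0)$ to yield $P(\tau > n) = P(S_n \in \{0,1\})$, as required. The step I expect to be the most delicate is verifying rigorously that sign-flipping the post-$\tau$ increments preserves the joint distribution when the $I_i$'s are only assumed independent (not identically distributed); this reduces to the observation that the conditional law of $(I_{\tau+1},\ldots,I_n)$ given $\tau = t$ and $(I_1,\ldots,I_t)$ factors as a product of symmetric marginals, so negating each coordinate is measure-preserving on each conditional slice.
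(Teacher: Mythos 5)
Your proof is correct and follows essentially the same route as the paper's: introduce the first hitting time of $1$, reflect the post-hitting increments, and close with the symmetry of $S_n$'s law, with only cosmetic differences in how the resulting probabilities are rearranged. The one point where you go further than the paper is in explicitly justifying that the sign-flip is measure-preserving for independent but non-identically-distributed increments (via conditioning on $\tau$ and the symmetry of each $I_i$ forced by $E[I_i]=0$ and $I_i\in\{-1,0,1\}$), which the paper leaves implicit.
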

\begin{proof}
Let $T=\min\{t\in\mathbb N: S_t=1\}$. The event $\{T\leq n\}$ is the complement of the event $\{S_1<1,\ldots,S_n< 1\}$, and
\begin{multline*}
    P(T\leq n)=P(S_n>1,T\leq n)+P(S_n<1,T\leq n)+P(S_n=1,T\leq n)\\
    =2P(S_n>1,T\leq n)+P(S_n=1,T\leq n)=2P(S_n>1)+P(S_n=1)\\
    =P(S_n>1)+P(S_n<-1)+P(S_n=1)\\
    =P(S_n\notin\{-1,0\})=P(S_n\notin\{0,1\}).
\end{multline*}
\end{proof}
\subsection{The Poisson Binomial distribution}
A Standard Poisson Binomial random variable is a finite sum of independent (not necessarily identically distributed) Bernoulli random variables. We define a Poisson Binomial (PB) random variable as the sum of a Standard Poisson Binomial random variable and an integer. Note the if $X$ and $Y$ are PB random variables, so are $X+Y$ and $X-Y$. The distribution of a PB random variable is called a PB distribution.

A PB distribution is uni-modal and its mode is attain at the mean up to rounding to a nearby integer (see \cite{samuels65}). It follows that if $X$ is a PB random variable, then the total variation distance between $X$ and $X+1$ is the value of $X$ at it's mode. We will use the following conclusion.
\begin{lem}\label{lemma: X X+1 distance}
Let $X$ be a PB random variable with $\mu=E[X]$. We have,
\[
d_{TV}(X,X+1)=\max_{t\in\mathbb Z}P(X=t)=\max\{P(X=\lfloor\mu\rfloor),P(X=\lceil\mu\rceil)\}.
\]
\end{lem}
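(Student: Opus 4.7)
The plan is to establish the two stated equalities separately. The first equality, $d_{TV}(X,X+1)=\max_t P(X=t)$, will follow from the standard pmf formula for total variation combined with unimodality of the PB distribution (which is asserted just above the lemma and attributed to Samuels). Starting from
\[
d_{TV}(X,X+1) = \tfrac 1 2 \sum_{t \in \mathbb Z}\bigl|P(X=t)-P(X+1=t)\bigr| = \tfrac 1 2 \sum_{t \in \mathbb Z}\bigl|p_t-p_{t-1}\bigr|,
\]
where $p_t := P(X=t)$, I would let $m$ be a mode of $(p_t)$ and split the sum at $t=m$. Unimodality gives $p_t-p_{t-1}\geq 0$ for $t\leq m$ and $p_t-p_{t-1}\leq 0$ for $t\geq m+1$, so the absolute values can be removed and each half telescopes to $p_m$. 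The boundary terms at $\pm\infty$ vanish because $X$ is the sum of finitely many Bernoullis shifted by an integer, hence supported on a finite subset of $\mathbb Z$. Summing the two halves gives $2p_m$, whence $d_{TV}(X,X+1)=p_m=\max_t p_t$.

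For the second equality, I would invoke the result of Samuels~\cite{samuels65} quoted in the paragraph preceding the lemma, which locates the mode of a PB random variable at $\lfloor\mu\rfloor$ or $\lceil\mu\rceil$. Since the distribution is unimodal, $\max_t p_t$ is attained at whichever of these two integers gives the larger probability, yielding the desired identity $\max_t P(X=t)=\max\{P(X=\lfloor\mu\rfloor),P(X=\lceil\mu\rceil)\}$.

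I do not anticipate any serious obstacle: both ingredients (unimodality and mode location) are imported from the Samuels reference, and the remaining content is the telescoping identity. The only slightly delicate point is handling the signs on either side of the mode and verifying that the boundary contributions at $\pm\infty$ are zero; both are immediate from unimodality and the finite support of a PB variable, respectively.
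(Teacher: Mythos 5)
Your proof is correct and matches the argument the paper implicitly relies on: the paper asserts (without a written proof) that unimodality of the PB distribution gives $d_{TV}(X,X+1)=\max_t P(X=t)$ via exactly the telescoping you spell out, and then invokes Samuels' mode-location result for the second equality. You have simply made explicit the short computation the authors leave to the reader.
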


A PB  distribution with a large variance can be approximated by a normal distribution with the same mean and variance in a very strong sense. Let $\phi(x)=\frac 1 {\sqrt{2\pi}}e^{\frac{x^2}{2}}$ be the the standard normal density. The following lemma is taken from Pitman \cite[Eq. (25)]{pitman97} who attributes it to Platnov~\cite{platonov80}.
\begin{lem}\label{lemma: normal approximation}
Let $X$ be a PB random variable with $\mu=E[X]$, and $\sigma^2=Var[X]$. For every $t\in\mathbb Z$,
\[
\left\vert \sigma P(X=t)-\phi\left(\frac{t-\mu}{\sigma}\right)\right\vert \leq \frac C \sigma,
\]
for some global constant $C$.
\end{lem}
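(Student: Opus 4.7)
The plan is a Fourier-analytic local central limit theorem in the spirit of Gnedenko, adapted to the non-identically-distributed Bernoulli setting.

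First I would reduce to the case in which $X$ is a \emph{standard} PB variable, i.e.\ $X=\sum_{i=1}^{N}X_i$ with $X_i\sim\mathrm{Bernoulli}(p_i)$ independent: adding an integer constant shifts $\mu$ and $t$ by the same amount and leaves the inequality invariant. I would also dispose of the regime $\sigma\le\sigma_0$ for any fixed absolute $\sigma_0$, since the left-hand side is then bounded by $\sigma_0+(2\pi)^{-1/2}$, so by enlarging $C$ the inequality becomes vacuous. Thus we may assume $\sigma$ is as large as we need. Write $\varphi(u)=E[e^{iuX}]=\prod_{i=1}^{N}(1-p_i+p_i e^{iu})$, $\mu=\sum_i p_i$, and $\sigma^2=\sum_i p_i(1-p_i)$.

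The main identity is Fourier inversion on the circle, $P(X=t)=\frac{1}{2\pi}\int_{-\pi}^{\pi}e^{-iut}\varphi(u)\,du$, compared with the continuous Fourier identity $\frac{1}{\sigma}\phi\!\left(\tfrac{t-\mu}{\sigma}\right)=\frac{1}{2\pi}\int_{-\infty}^{\infty}e^{-iut}e^{i\mu u-\sigma^2u^2/2}\,du$. After the change of variable $u=s/\sigma$, I would compare the two integrals region by region in $s$: a central region $|s|\le A\sqrt{\sigma}$, a middle region $A\sqrt{\sigma}<|s|\le\pi\sigma$, and the Gaussian tail $|s|>\pi\sigma$, for a suitable constant $A$. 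In the central region I would expand $\log\varphi(s/\sigma)$ factor by factor as $\log(1-p_i+p_ie^{is/\sigma})=ip_is/\sigma-p_i(1-p_i)s^2/(2\sigma^2)+r_i(s)$ with $|r_i(s)|\le Cp_i(1-p_i)|s|^3/\sigma^3$, so that summing gives $\log\varphi(s/\sigma)=i\mu s/\sigma - s^2/2 + R(s)$ with $|R(s)|\le C|s|^3/\sigma$ uniformly in the $p_i$. Combined with $|e^a-e^b|\le|a-b|e^{\max\{\Re a,\Re b\}}$ and integration against $e^{-s^2/2}$, this produces a contribution of order $1/\sigma$ to $|\sigma P(X=t)-\phi(\cdot)|$. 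In the middle region, the pointwise identity $|1-p_i+p_ie^{iu}|^2=1-2p_i(1-p_i)(1-\cos u)\le\exp(-2p_i(1-p_i)(1-\cos u))$ multiplies up to $|\varphi(u)|\le e^{-\sigma^2(1-\cos u)}$; using $1-\cos u\ge 2u^2/\pi^2$ on $|u|\le\pi$, this contribution is exponentially small in $A^2$. The Gaussian tail is handled by a Mills-type estimate and is exponentially small in $\sigma^2$.

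The main obstacle is the uniformity of the cubic Taylor remainder in the central region: a naive bound $|R(s)|=O(N|s|^3/\sigma^3)$ would be too weak when many of the $p_i$ are close to $0$ or $1$ and $\sigma^2\ll N$. The fix is the factor-by-factor accounting above, which weights the $i$-th cubic remainder by $p_i(1-p_i)$ so that the sum rebuilds $\sigma^2$ rather than $N$. Once this bookkeeping is in place, the standard local CLT machinery delivers the advertised $C/\sigma$ error, with $C$ absolute.
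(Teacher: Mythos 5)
The paper does not prove this lemma at all: it is cited verbatim from Pitman (Eq.\ (25) of \cite{pitman97}), who in turn attributes it to Platonov \cite{platonov80}. So your proposal is not competing with an argument in the paper but supplying one where the paper relies on a reference. That said, your Fourier-analytic local-CLT argument is the standard way to prove this kind of statement and is in the spirit of the cited sources.

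The proof is essentially correct, and you have identified the one genuinely delicate point: the cubic Taylor remainder $r_i$ of $\log(1-p_i+p_ie^{iu})$ must be weighted by $p_i(1-p_i)$ rather than bounded crudely by $O(|u|^3)$, so that $\sum_i |r_i|$ reproduces $\sigma^2|u|^3$ and not $N|u|^3$. A short computation confirms this bound: $f(u)=\log(1-p_i+p_ie^{iu})$ has $f''(u)=-p_i(1-p_i)e^{iu}/(1-p_i+p_ie^{iu})^2$ and $f'''(u)=-ip_i(1-p_i)e^{iu}\bigl(1-p_i-p_ie^{iu}\bigr)/(1-p_i+p_ie^{iu})^3$, and since $|1-p_i+p_ie^{iu}|^2=1-2p_i(1-p_i)(1-\cos u)\ge\tfrac12$ for $|u|\le\tfrac\pi2$, one gets $|f'''(u)|\le C\,p_i(1-p_i)$ uniformly, hence $|r_i(u)|\le C\,p_i(1-p_i)|u|^3$ by Lagrange's form of the remainder. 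A couple of minor exposition points. First, in the central region $|s|\le A\sqrt\sigma$ the remainder $|R(s)|\le C|s|^3/\sigma$ is not itself small near the endpoint, so the step $|e^a-e^b|\le|a-b|e^{\max\{\Re a,\Re b\}}$ should be combined with the uniform bound $\Re\log\varphi(s/\sigma)\le-\sigma^2(1-\cos(s/\sigma))\le-2s^2/\pi^2$ (the same estimate you use in the middle region), which gives $e^{\max\{\Re a,\Re b\}}\le e^{-2s^2/\pi^2}$ and makes $\int|s|^3\sigma^{-1}e^{-2s^2/\pi^2}\,ds=O(1/\sigma)$ immediate. Second, the middle-region contribution is exponentially small in $A^2\sigma$ (not merely in $A^2$), which is of course more than enough. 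Neither of these affects the validity of the argument.
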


\section{Proofs}

We denote the indicator vector of an action $j\in[k]$ by $e_j\in\mathbb R^k$. For a strategy profile $a=(a_1,\ldots,a_n)\in[k]^n$,  define
\[
N(a)=\sum_{i=1}^n e_{a_i}.
\]
Namely, $N(a)\in\mathbb Z_+^k$ is the vector that counts the number players who take each one of the actions. Since a perturbed action  profile $a^\delta=(a_1^\delta,\ldots,a_n^\delta)$ is a random variable that takes values in $[k]^n$, $N(a^\delta)$ is a random variable that takes values in $\mathbb Z_+^k$. Given an anonymous game $g\colon[k]^n\to[0,1]^n$, and a player $i\in[n]$, $g_i(\cdot)$ is a function of $a_i$ and $N(a_{-i})$; therefore, for any action $a_i\in[k]$, $g_i^\delta(a_i,a_{-i})=E[f(N(a_{-i}^\delta))]$, for some function $f\colon \mathbb Z_+^k\to[0,1]$. Since any such $f$ can be realised by setting $g_i(a_i,a_{-i})=f(N(a_{-i}))$, 
\[
\lambda(n,k,\delta)=\max_{f,a,b}E\left[f(N(a^\delta))\right] - E\left[f(N(b^\delta))\right],
\]
where the maximum if over all $f\colon\mathbb Z_{+}^k\to[0,1]$, and $a,b\in[k]^{n-1}$ subject to $\rho(a,b)=1$. The maximum on the right-hand side is attained when $f$ achieves the total variation distance between $N(a^\delta)$ and $N(b^\delta)$; therefore, by (arbitrarily) fixing the place
in which $a$ and $b$ differ, we have
\begin{multline}\label{equation: total variation}
\lambda(n,k,\delta)=\max_{a\in[k]^{n-2}} d_{TV}\left(e_1^\delta+N(a^\delta), e_2^\delta+N(a^\delta)\right)\\
=  (1-\delta)\max_{a\in[k]^{n-2}} d_{TV}\left(e_1+N(a^\delta), e_2+N(a^\delta)\right),
\end{multline}
where $d_{TV}(\cdot,\cdot)$ denotes the total variation distance.

\subsection{Proof of Theorem~\ref{theorem: k=3}}

In light of \eqref{equation: total variation}, the next lemma implies the upper bound of Theorem~\ref{theorem: k=3}.
\begin{lem}\label{lemma: reflection}
For every $k\geq 2$, $n\geq 1$,  and $0<\delta<1$, 
\[
\max_{a\in[k]^n} d_{TV}(e_1+N(a^\delta),e_2+N(a^\delta))\leq P(S_n^{2\delta/k}\in\{0,1\}).
\]
\end{lem}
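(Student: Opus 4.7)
\emph{Proof plan.} I would bound the TV distance by conditioning on a carefully chosen $\sigma$-algebra that exposes the underlying symmetric random walk. First, since TV distance is translation-invariant,
\[
d_{TV}(e_1 + N(a^\delta),\; e_2 + N(a^\delta)) \;=\; d_{TV}(N(a^\delta),\; N(a^\delta) + e_2 - e_1).
\]
Represent each perturbed action as $Z_i = a_i$ when $R_i = 0$ and $Z_i = U_i$ when $R_i = 1$, where $R_i \sim \mathrm{Bernoulli}(\delta)$ and $U_i \sim \mathrm{Uniform}([k])$ are all mutually independent. Let $G$ be the $\sigma$-algebra generated by $(R_i)_{i=1}^n$ together with $A := (N_3, \ldots, N_k)$. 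Applying the triangle inequality to $P(N=n) = \E[P(N=n\mid G)]$ (and likewise for the shift) and summing over $n$, one gets
\[
d_{TV}(N(a^\delta),\; N(a^\delta) + e_2 - e_1) \;\leq\; \E\bigl[d_{TV}\bigl(N\mid G,\; (N+e_2-e_1)\mid G\bigr)\bigr].
\]

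The heart of the argument is this: given $G$, the set $I = \{i : R_i = 1\}$ of randomizing players is known, and so are the counts of randomizers landing in each action $j \geq 3$. Therefore the number $|T|$ of randomizers landing in $\{1, 2\}$ is $G$-measurable. Crucially, conditional on $G$, those $|T|$ randomizers split independently and uniformly between actions 1 and 2; this follows from the exchangeability of iid Uniform draws once the counts in $\{3,\ldots,k\}$ are fixed. Hence the count $N_1^{\mathrm{rand}}$ of randomizers playing action 1 satisfies $N_1^{\mathrm{rand}} \mid G \sim \mathrm{Bin}(|T|, 1/2)$. Since $N \mid G$ and $(N + e_2 - e_1) \mid G$ differ only in the split of the first two coordinates (all other components being $G$-measurable), the inner TV distance reduces to $d_{TV}(N_1^{\mathrm{rand}}\mid G,\; N_1^{\mathrm{rand}}+1\mid G)$, which by Lemma~\ref{lemma: X X+1 distance} equals the mode $\binom{|T|}{\lfloor |T|/2 \rfloor} 2^{-|T|} = P(\tilde W_{|T|} \in \{0, 1\})$, where $\tilde W$ is a simple symmetric random walk.

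To finish, note that $|T| = \sum_i \ind\{R_i = 1,\; U_i \in \{1, 2\}\}$ is a sum of $n$ independent $\mathrm{Bernoulli}(2\delta/k)$ indicators, so $|T| \sim \mathrm{Bin}(n, 2\delta/k)$. Representing $S_n^{2\delta/k}$ as $\tilde W_J$, where $J \sim \mathrm{Bin}(n, 2\delta/k)$ is the number of nonzero steps, yields $\E[P(\tilde W_{|T|} \in \{0,1\})] = P(S_n^{2\delta/k} \in \{0,1\})$, which is the stated upper bound. The main subtlety, and the one place to be careful, is the conditional distribution step: verifying that given $G$ the randomizers falling in $\{1,2\}$ really do behave like iid fair coin flips, so that Lemma~\ref{lemma: X X+1 distance} applies to an honest Binomial conditional law and not to some more complicated mixture.
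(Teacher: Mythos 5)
Your proof is correct, but it takes a genuinely different route from the paper's. The paper proves Lemma~\ref{lemma: reflection} by constructing an explicit coupling $(Z_n,Z'_n)$: the two copies mirror each other's uniform draws between actions $1$ and $2$ until they coalesce, so the signed difference $S_i=1-(Z_i)_1+(Z'_i)_1$ is a lazy symmetric walk stopped at $1$, and the coupling inequality together with the Reflection Principle (Lemma~\ref{lemma: reflection principle}) close the argument. You instead condition on the $\sigma$-algebra $G$ that exposes which players randomize and the counts landing in actions $\geq 3$, apply the convexity bound $d_{TV}(\mu,\nu)\leq\E[d_{TV}(\mu|G,\nu|G)]$, and use the multinomial fact that given $G$ the remaining $|T|$ randomizers fall into $\{1,2\}$ as iid fair coins, so $N_1^{\mathrm{rand}}\mid G\sim\mathrm{Bin}(|T|,1/2)$; Lemma~\ref{lemma: X X+1 distance} then turns the conditional TV distance into the central binomial probability, and averaging over $|T|\sim\mathrm{Bin}(n,2\delta/k)$ via the thinning representation of $S_n^{2\delta/k}$ recovers the stated bound. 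Your route has two advantages: it avoids both the explicit coupling construction (where one must verify that the $X'_i$ have the correct joint law) and the Reflection Principle entirely, and it unifies the $k\geq 3$ upper bound with the paper's own $k=2$ argument (Lemma~\ref{lemma: k=2}), since both then run through Lemma~\ref{lemma: X X+1 distance} and the mode of a PB distribution. The paper's coupling has the complementary virtue of making the first-passage interpretation of $P(S_n^{2\delta/k}\in\{0,1\})$ concrete and explicit, which is the lens the introduction emphasizes.
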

\begin{proof}
Let $a\in[k]^n$ be arbitrary. Let $X_1,\ldots,X_n\in\{e_1,\ldots, e_k\}$ be independent random vectors indicating the realisations of $a_1^\delta,\ldots ,a_n^\delta$ respectively. Namely, 
\[
P(X_i=e_j)=\begin{cases}
1-\delta + \frac \delta k & j = a_i,\\
\frac \delta k & j \neq a_i,
\end{cases}
\]

We would like to construct a coupling $(Z_n,Z'_n)$ such that $Z_n\sim e_1+\sum_{i=1}^n X_i$, $Z'_n\sim e_2+\sum_{i=1}^n X_i$ and $P(Z_n\neq Z'_n)\leq \eta^{-\frac 1 2} + \mathcal O(\eta^{-1})$. To this end, we define random variables $X'_1,\ldots,X'_n$ that have the same joint distribution as $X_1,\ldots,X_n$, and let $Z_m=e_1+\sum_{i=1}^m X_i$ and $Z'_m=e_2+\sum_{i=1}^m X'_i$, for every $m=1,\ldots,n$.

Informally, each $X'_i$ is going to be defined to be $e_2$ wherever $X_i=e_1$ and vice-versa, as long as $Z_{i-1}\neq Z'_{i-1}$, and $X'_i=X_i$ otherwise (either if $X_i\notin\{e_1,e_2\}$, or once $Z_{i-1}=Z'_{i-1}$). 

Formally, the random variables $X_1,\ldots,X_n$ are realised as follows:
\[
X_i= \chi_i e_{U_i} + (1-\chi_i)e_{a_i},
\]
where $\chi_1,\ldots,\chi_n \sim \mathrm{Bernoulli}(\delta)$, $U_1,\ldots,U_n\sim \mathrm{Uniform}([k])$ are all independent random variables.

The $X'_i$-s are coupled with the $X_i$-s through the following definition:
\[
X'_i= \chi_i e_{U'_i} + (1-\chi_i)e_{a_i},
\]
where $U'_1,\ldots,U'_n$ are defined recursively by
\[
U'_i = \begin{cases}
3-U_i& Z_{i-1}\neq Z'_{i-1}\text{ and } U_i\in[2],\\
U_i & \text{otherwise,}
\end{cases}
\]
setting $Z_0=e_1$, $Z'_0=e_2$. 

We explain why $X'_1,\ldots,X'_n$ are indeed independent random variables with $X_i\sim a_i^\delta$ for every $i\in [n]$. Let $\mathcal F_i=\sigma\langle \chi_1,U_1,\ldots,\chi_{i},U_{i}\rangle$. By its definition, the distribution of $U'_i$ is uniform in $[k]$ conditioned on $\mathcal F_{i-1}$, for every $i$; therefore $X'_i\sim a^\delta_i$ conditioned on $\mathcal F_{i-1}$. Furthermore, $X'_i$ is $\mathcal F_i$-measurable ; therefore $X'_i\sim a^\delta_i$ conditioned on $X'_1,\ldots,X'_{i-1}$.

The definition of $X'_i$ is so that  $Z_i=Z'_i$ implies that $Z_{i+1}=Z'_{i+1}$, for every $i\in[n-1]$; therefore $Z_n=Z'_n$ iff there exists $i\in[n]$ such that $Z_i=Z'_i$. Furthermore, for every $0\leq i\leq n$ and $3\leq j\leq k$, $(Z_i)_j=(Z'_i)_j$ and $(Z_i)_1+(Z_i)_2=(Z'_i)_1+(Z'_i)_2$; therefore  $Z_n=Z'_n$ iff there exists $1 \leq i\leq n$ such that $(Z_i)_1=(Z'_i)_1$. 

Let $S_i := 1- (Z_i)_1+(Z'_i)_1$, $i=0,\ldots, n$. Note that $S_i$ is almost a symmetric random walk on $\mathbb Z$ with the exception that it stays put forever once it hits 1. A direct calculation  shows that conditioned on $S_{i}\neq 1$,
\begin{equation}
\label{equation:random walk}
S_{i+1}=
\begin{cases}
S_i   & \text{w.p.}\quad 1-\frac{2\delta}{k},\\
S_i + 1 & \text{w.p. }\quad \frac{\delta}{k},\\
S_i - 1 & \text{w.p. }\quad \frac{\delta}{k}.
\end{cases}
\end{equation}

Since \eqref{equation:random walk} is exactly the rule of $(S_i^{2\delta/k})_{i=0}^\infty$ (unlike $S_i$, $S_i^{2\delta/k}$ does not stop when it hits 1), Lemma~\ref{lemma: reflection principle} concludes the proof of Lemma~\ref{lemma: reflection}. 
\end{proof}

The following lemma states that the upper-bound of Lemma~\ref{lemma: reflection} is tight in case that $k\geq 3$.
\begin{lem}\label{lemma: lower bound}
For every $k\geq 3$, $n\geq 1$,  and $0<\delta<1$,
\[
\max_{a\in [k]^n}d_{TV}(e_1+N(a^\delta),e_2+N(a^\delta))\geq P(S_n^{2\delta/k}\in\{0,1\}). 
\]
\end{lem}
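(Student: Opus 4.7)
The plan is to exhibit a single profile $a \in [k]^n$ that witnesses the lower bound, together with a well-chosen event on which the two resulting distributions differ by exactly the target amount. The natural candidate is $a=(3,3,\ldots,3)$, which is a legal profile precisely when $k \geq 3$; this is where the hypothesis enters the argument.

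Under this profile, each random vector $X_i$ realising $a_i^\delta$ takes the value $e_1$ with probability $\delta/k$, the value $e_2$ with probability $\delta/k$, and neither with probability $1-2\delta/k$. Hence the scalar increments $X_{i,1}-X_{i,2}$ are i.i.d.\ with the same law as the increments of a symmetric walk of rate $2\delta/k$, so the coordinate difference $D := N(a^\delta)_1 - N(a^\delta)_2$ has the same distribution as $S_n^{2\delta/k}$.

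For the lower bound on the total variation distance, I would test on the half-space $A=\{v \in \mathbb{Z}_+^k : v_1 \geq v_2\}$. A short calculation gives
\[
P(e_1+N(a^\delta) \in A) - P(e_2+N(a^\delta) \in A) \;=\; P(D \geq -1) - P(D \geq 1) \;=\; P(D \in \{-1,0\}),
\]
and by the symmetry of $S_n^{2\delta/k}$ about the origin this equals $P(S_n^{2\delta/k} \in \{0,1\})$, which is exactly what we want. There is no serious obstacle here; the content is in picking the right profile (which is precisely where $k \geq 3$ is used) and the right witness set. The construction also makes transparent why the method breaks when $k=2$: there is no action outside $\{1,2\}$ on which to park the non-perturbing mass, every $X_i$ contributes nontrivially to $D$, and $D$ ceases to be a rate-$2\delta/k$ walk, which prefigures the different formula appearing in Theorem~\ref{theorem: k=2}.
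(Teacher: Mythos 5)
Your proposal is correct and follows essentially the same route as the paper: same profile $\mathtt{\bar 3}=(3,\ldots,3)$, same scalar $D=N(a^\delta)_1-N(a^\delta)_2\sim S_n^{2\delta/k}$, and a threshold test event that gives the bound $P(D\in\{-1,0\})=P(D\in\{0,1\})$. The only cosmetic difference is that you test directly on the half-space $\{v_1\geq v_2\}$ (invoking symmetry of $D$ at the end), whereas the paper first pushes forward by $v\mapsto v_1-v_2$ and then tests $\{x>0\}$, obtaining $P(X\in\{0,1\})$ with no appeal to symmetry.
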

\begin{proof}
Consider the strategy profile $\mathtt{\bar 3}\in[k]^n$ in which all of the players take action 3. Let $X$ be the random variable that counts the difference between the number of players who play 1 and those who play 2 under the mixed strategy profile $\mathtt{\bar 3}^\delta$. Formally, define $f\colon \mathbb Z^k\to \mathbb Z$ by $f(x_1,\ldots,x_n)=x_1-x_2$. Then, $X:=f(N(\mathtt{\bar 3}^\delta))$. Since $f(e_1+N(\mathtt{\bar 3}^\delta))=X+1$ and $f(e_2+N(\mathtt{\bar 3}^\delta))=X-1$,
\begin{multline*}
d_{TV}(e_1+N(\mathtt{\bar 3}^\delta),e_2+N(\mathtt{\bar 3}^\delta))\geq d_{TV}(X+1,X-1) \\
\geq P(X+1>0)-P(X-1>0)=P(X\in\{0,1\}).
\end{multline*}

The proof of Lemma~\ref{lemma: lower bound} is concluded since $X\sim S_n^{2\delta/k}$.
\end{proof}

\subsection{Proof of Theorem \ref{theorem: k=2}}
Let $X_1,X_2,\ldots$ be i.i.d.\ $\mathrm{Bernoulli}(\delta/2)$ random variables. Define
\[
M(n,\delta)=\max_{l,s\in\{0,\ldots,n\}}P\left(\sum_{i=0}^l X_i + \sum_{j=l+1}^n (1-X_j)=s\right),
\]
and $M(0,\delta)=1$ by convention.
\begin{lem}
\label{lemma: k=2}
For every $n\geq 2$ and $\delta\in(0,1)$,
\[
\lambda(n,2,\delta)=(1-\delta)M(n-2,\delta).
\]
\end{lem}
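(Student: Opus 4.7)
My plan is to start from \eqref{equation: total variation}, which already reduces $\lambda(n,2,\delta)/(1-\delta)$ to the worst-case total variation $d_{TV}(e_1 + N(a^\delta), e_2 + N(a^\delta))$ over $a \in [2]^{n-2}$, and then collapse this two-coordinate problem to a one-dimensional one. For $k = 2$ the coordinates of $N(a^\delta)$ sum to the deterministic constant $n-2$, so the joint law of $N(a^\delta)$ is determined by its first coordinate $Y_a := (N(a^\delta))_1$. Adding $e_1$ shifts this coordinate by $1$ and adding $e_2$ leaves it unchanged, hence
\[
d_{TV}(e_1 + N(a^\delta), e_2 + N(a^\delta)) = d_{TV}(Y_a + 1, Y_a).
\]

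Since $Y_a$ is a sum of independent Bernoullis, it is a PB random variable, and Lemma~\ref{lemma: X X+1 distance} gives $d_{TV}(Y_a+1, Y_a) = \max_t P(Y_a = t)$. If $a$ contains $l$ coordinates equal to $1$ (and therefore $n-2-l$ equal to $2$), then each $a_i = 1$ contributes a Bernoulli$(1-\delta/2)$ to $Y_a$ (the probability that its perturbed action equals $1$), while each $a_i = 2$ contributes a Bernoulli$(\delta/2)$, giving the distributional identity
\[
Y_a \stackrel{d}{=} \sum_{i=1}^l (1 - X_i) + \sum_{j=l+1}^{n-2} X_j
\]
with $X_1,\ldots,X_{n-2}$ i.i.d.\ Bernoulli$(\delta/2)$.

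The final step is to match this against the definition of $M(n-2,\delta)$. Applying the bijection $l \mapsto n-2-l$ on $\{0,\ldots,n-2\}$ and relabelling the i.i.d.\ $X_i$'s rewrites the previous display as $\sum_{i=1}^l X_i + \sum_{j=l+1}^{n-2}(1-X_j)$, which is exactly the expression inside $M(n-2,\delta)$. Hence maximising over $a \in [2]^{n-2}$ (equivalently, over the histogram parameter $l$) and over $t$ yields precisely $M(n-2,\delta)$. The boundary case $n=2$ is also consistent: the profile is empty, $Y_a\equiv 0$, $\max_t P(Y_a=t)=1$, matching the convention $M(0,\delta)=1$. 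Multiplying by $(1-\delta)$ gives the lemma.

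The whole argument is essentially bookkeeping once the one-coordinate reduction is in place; the only non-trivial input is Lemma~\ref{lemma: X X+1 distance}, which turns a total-variation distance into the pointwise maximum of a PB density. The step I would be most careful about is the re-indexing $l\leftrightarrow n-2-l$ together with the relabelling of the $X_i$'s, because the sum emerging naturally from $Y_a$ and the one appearing inside $M(n-2,\delta)$ are only superficially different, and an off-by-one or a mis-identified Bernoulli parameter would break the identification.
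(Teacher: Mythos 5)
Your proof is correct and takes essentially the same route as the paper: starting from~\eqref{equation: total variation}, collapsing $N(a^\delta)$ to its first coordinate, invoking Lemma~\ref{lemma: X X+1 distance} for the PB identity $d_{TV}(Y_a+1,Y_a)=\max_t P(Y_a=t)$, and then matching the resulting mixture of Bernoullis against the definition of $M(n-2,\delta)$. The only (welcome) difference is that you make the final re-indexing $l\leftrightarrow n-2-l$ explicit, whereas the paper simply asserts the distributional identity, which is harmless since the maximum ranges over all $l$.
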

\begin{proof}
By \eqref{equation: total variation} it is sufficient to prove that 
\begin{equation*}
\max_{a\in[k]^{n}} d_{TV}\left(e_1+N(a^\delta), e_2+N(a^\delta)\right) = M(n,\delta). 
\end{equation*}
For every $a\in[2]^n$ there is an $l\in \{0,\ldots,n\}$ such that $N(a)=(l,n-l)$ and vice versa; therefore it is sufficient to prove that for each such pair $a\in[2]^n$ and $l\in\{0,\ldots,n\}$,
\begin{multline*}
d_{TV}\left(e_1+N(a^\delta), e_2+N(a^\delta)\right) \\
= \max_{s\in\{0,\ldots,n\}}P\left(\sum_{i=0}^l X_i + \sum_{j=l+1}^n (1-X_j)=s\right). 
\end{multline*}

Let $X$ be the random variable that counts the number of players who play 1 under the mixed strategy profile $a^\delta$. Formally, $X$ is defined by $N(a^\delta)= (X,n-X)$. Let $f\colon x\mapsto n+1-x$. Since $e_1+N(a^\delta)=(X+1,f(X+1))$ and $e_2+N(a^\delta)=(X,f(X))$,
\[
d_{TV}(e_1+N(a^\delta),e_2+N(a^\delta))=d_{TV}(X+1,X).
\]
Since $X$ is PB, by Lemma~\ref{lemma: X X+1 distance}, 
\[
d_{TV}(X+1,X)=\max_s P(X=s).
\]
The proof is concluded since $X\sim \sum_{i=0}^l X_i + \sum_{j=l+1}^n (1-X_j)$.
\end{proof}
The next lemma states that the maximisers in the definition of $M(n,\delta)$ are $s=l=n/2$, for $n$ even, and it provides upper- and lower-bounds, for $n$ odd.
\begin{lem}\label{lemma: esg}
For every $n\in\mathbb N$ and $\delta\in(0,1)$, let 
\[P_n=P\left(\sum_{i=1}^n X_i+\sum_{j=n+1}^{2n}(1-X_j)=n\right).\]
Then,
\[
P_{\lceil n/2 \rceil}\leq M(n,\delta) \leq \sqrt{P_{\lceil n/2 \rceil}P_{\lfloor n/2 \rfloor}}
\]
\end{lem}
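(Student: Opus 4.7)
The plan is first to rewrite $M(n,\delta)$ as an optimisation over Binomial differences. A short computation shows that, after shifting, $Y_l := \sum_{i=1}^l X_i + \sum_{j=l+1}^n (1-X_j)$ has the same maximum PMF as $B_1 - B_2$, where $B_1 \sim \mathrm{Binom}(l,p)$ and $B_2 \sim \mathrm{Binom}(n-l,p)$ are independent with $p := \delta/2$. Thus $M(n,\delta) = \max_l f(l,n-l)$, where $f(m_1,m_2) := \max_s P(B_1 - B_2 = s)$, and $P_m = f(m,m)$ is simply the collision probability of two i.i.d.\ $\mathrm{Binom}(m,p)$ variables.

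For the lower bound I would take the balanced split $l = \lceil n/2\rceil$. For $n$ even this is immediate: $M(n,\delta) \geq f(n/2,n/2) = P_{n/2}$. For $n$ odd I would introduce an auxiliary $\xi \sim \mathrm{Bernoulli}(p)$ independent of $B_1,B_2$, so that $B_2 + \xi \sim \mathrm{Binom}(\lceil n/2\rceil,p)$, and observe that $P_{\lceil n/2\rceil} = P(B_1 = B_2 + \xi) = (1-p)P(B_1 - B_2 = 0) + p\,P(B_1 - B_2 = 1)$; this convex combination is trivially at most $f(\lceil n/2\rceil,\lfloor n/2\rfloor) \leq M(n,\delta)$.

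For the upper bound I would use Fourier inversion. The characteristic function $\hat p(t) := E[e^{it(B_1-B_2)}] = ((1-p) + pe^{it})^{m_1}((1-p) + pe^{-it})^{m_2}$ has modulus $|\hat p(t)| = \psi(t)^{n/2}$ with $\psi(t) := (1-p)^2 + p^2 + 2p(1-p)\cos t$, and this modulus depends only on $n = m_1 + m_2$, not on the split. The triangle inequality then gives $\max_s P(B_1 - B_2 = s) \leq \frac{1}{2\pi}\int_{-\pi}^\pi\psi(t)^{n/2}\,dt$. For $n$ even, $\psi^{n/2}$ is exactly the characteristic function of the walk $S^{r}_{n/2}$ with rate $r = 2p(1-p) = \delta(1-\delta/2)$, so the integral equals $P_{n/2}$. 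For $n$ odd I would factor $\psi^{n/2} = \psi^{(n-1)/4}\cdot\psi^{(n+1)/4}$ and apply Cauchy--Schwarz in $L^2([-\pi,\pi])$ to bound the integral by $\sqrt{P_{(n-1)/2}\,P_{(n+1)/2}}$. Combined, $f(m_1,m_2) \leq \sqrt{P_{\lceil n/2\rceil}\,P_{\lfloor n/2\rfloor}}$ uniformly in the split.

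The main obstacle is the upper bound. A naive pointwise Cauchy--Schwarz on the convolution $P(B_1 - B_2 = s) = \sum_k P(B_1=k)P(B_2=k-s)$ only delivers $f(m_1,m_2) \leq \sqrt{P_{m_1}\,P_{m_2}}$. Since $P_m$ is in fact log-\emph{convex} in $m$ (apply Cauchy--Schwarz to $P_m = \frac{1}{2\pi}\int\psi^m\,dt$), the bound $\sqrt{P_{m_1}P_{m_2}}$ under $m_1 + m_2 = n$ is maximised at the \emph{extreme} splits and degrades to $\sqrt{P_n}$---asymptotically far weaker than $P_{n/2}$. Recognising that $|\hat p(t)|$ is split-independent is the essential step that unlocks the sharp inequality.
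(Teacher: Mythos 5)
Your proposal is correct, but it takes a genuinely different route from the paper. The paper's proof works entirely in the ``time domain'': for the lower bound it shows $M(n,\delta)$ is non-increasing in $n$ and $P_m\leq M(2m,\delta)$; for the upper bound it splits the signed Bernoulli sum $\sum_i\epsilon_i X_i$ into the first $\lceil n/2\rceil$ and last $\lfloor n/2\rfloor$ terms, applies Cauchy--Schwarz to the convolution $\sum_t P(Y=t)P(Z=s-n-t)$, and then shows the collision probability $\sum_t P(W=t)^2$ is invariant under flipping the signs $\epsilon_i$, via the identity $P(X+Y=X'+Y')=P(X-Y=X'-Y')$. You instead pass to the Fourier side: inversion gives $\max_s P(B_1-B_2=s)\leq\frac{1}{2\pi}\int|\hat p|$, the modulus $|\hat p(t)|=\psi(t)^{n/2}$ is manifestly split-independent, and Cauchy--Schwarz in $L^2([-\pi,\pi])$ handles odd $n$; for the lower bound you decompose $\mathrm{Binom}(\lceil n/2\rceil,p)$ as $\mathrm{Binom}(\lfloor n/2\rfloor,p)+\mathrm{Bernoulli}(p)$ rather than invoking monotonicity. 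The two arguments are Parseval duals: your $L^2$ Cauchy--Schwarz on characteristic functions corresponds to the paper's $\ell^2$ Cauchy--Schwarz on PMFs, and the paper's sign-flip invariance is precisely the statement that $|\hat p|$ does not see the split. The payoff of your route is that split-independence of $|\hat p|$ is immediate from $|(1-p)+pe^{\pm it}|^2=\psi(t)$, whereas the paper needs a separate (short but non-obvious) probabilistic lemma; the payoff of the paper's route is that it stays entirely elementary, avoiding characteristic functions. Your closing observation, that the naive per-convolution Cauchy--Schwarz (splitting $B_1-B_2$ by sign) only yields $\sqrt{P_{m_1}P_{m_2}}$, which by log-convexity of $m\mapsto P_m$ is maximised at the extreme splits, correctly pinpoints the trap that both proofs must avoid by balancing the split.
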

\begin{proof}
To prove the first inequality, $P_{\lceil n/2 \rceil}\leq M(n,\delta)$, it is sufficient to show that $M(n,\delta)$ is decreasing in $n$, and $P_n\leq M(2n,\delta)$, for every $n\in\mathbb N$. The latter follows  from the definition of $M(n,\delta)$  directly. The former holds since, there are some $l,s_0,s_1$, such that 
\begin{multline*}
M(n+1,\delta)=P(X_{n+1}=0)P\left(\sum_{i=1}^{l} X_i+\sum_{j=l+1}^{n}(1-X_j)=s_0\right)\\
+P(X_{n+1}=1)P\left(\sum_{i=1}^{l} X_i+\sum_{j=l+1}^{n}(1-X_j)=s_1\right)\leq M(n,\delta).
\end{multline*}

It remains to prove the second inequality $M(n,\delta)\leq \sqrt{P_{\lceil n/2 \rceil}P_{\lfloor n/2 \rfloor}}$. Let $l$ and $s$ be such that $M(n,\delta)=P\left(\sum_{i=1}^{l} X_i+\sum_{j=l+1}^{n}(1-X_j)=s\right)$. Define $\epsilon_1,\ldots,\epsilon_n\in\{+1,-1\}$ by $\epsilon_i=+1$ ($i\leq l$) and $\epsilon_i=-1$ ($i>l$). Let $Y=\sum_{i=1}^{\lceil n/2 \rceil}\epsilon_i X_i$ and $Z=\sum_{i=\lceil n/2 \rceil+1}^{n}\epsilon_i X_i$. Then, by Cauchy-Schwarz Inequality,
\begin{multline*}
M(n,\delta) =P\left(Y+Z=s-n\right)=\sum_{t}P(Y=y)P(Z=s-n-t)\\
\leq \sqrt{\sum_{t}\left(P(Y=t)\right)^2\sum_{t}\left(P(Z=t)\right)^2}.
\end{multline*}
The proof will be concluded by showing that $\sum_{t}\left(P(Y=t)\right)^2=P_{\lceil n/2 \rceil}$ and $\sum_{t}\left(P(Z=t)\right)^2=P_{\lfloor n/2 \rfloor}$. More generally, we show that for every $n$ and every $\epsilon_1,\ldots,\epsilon_n\in\{+1,-1\}$, letting $X=\sum_{i=1}^n \epsilon_i X_i$,
\begin{equation}\label{equation: P_n}
\sum_{t}\left(P\left(X=t\right)\right)^2=P_n.
\end{equation}
Since
\[
\sum_{t}\left(P\left(X=t\right)\right)^2 = P(X=X'),
\]
where $X'$ is an independent copy of $X$, the case of $\epsilon_1=\cdots =\epsilon_n$ is evident. It remains to show that toggling one of the $\epsilon_i$-s does change the quantity at the left-hand side of \eqref{equation: P_n}. More generally, we show that for any two independent discrete random variables, $X$ and $Y$
\[
P(X+Y = X'+Y') = P(X-Y=X'-Y'),
\]
where $X',Y'$ are independent copies of $X,Y$. This is true since, 
\[
P(X+Y = X'+Y') = P(X-Y'=X'-Y)= P(X-Y=X'-Y').
\]
\end{proof}
The proof of Theorem~\ref{theorem: k=2} follows immediately from Lemmata~\ref{lemma: k=2} and \ref{lemma: esg}, since  $S_n^{\delta(1-\delta/2)}\sim \sum_{i=1}^n \left(X_i-X_{n+i}\right)$.
\subsection{Proof of Theorem~\ref{theorem:main}}
By Theorem~\ref{theorem: k=2}, the second part of Theorem~\ref{theorem:main}, the case $k=2$, follows from the next claim.
\begin{claim}\label{claim: asymptotic r<1/2}
For every $n\in\mathbb N$, and $0<r\leq \frac 1 2$, let $\eta = 2\pi n r$. Then,
\[
\left\vert P\left(S_{n}^{r}=t\right)-\eta^{-\frac 1 2}\right\vert = \mathcal O(\eta^{-1}),\quad\quad t=0,1.
\]
\end{claim}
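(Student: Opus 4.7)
The plan is to exhibit $S_n^r$ as a Poisson Binomial (PB) random variable with mean $0$ and variance $nr = \eta/(2\pi)$, and then apply Lemma~\ref{lemma: normal approximation} combined with the fact that $\phi'(0)=0$.

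First I would produce the PB representation. Since $r\leq 1/2$, the equation $2p(1-p)=r$ has a root $p=(1-\sqrt{1-2r})/2\in(0,1/2]$. Taking $Y_1,\ldots,Y_n,Z_1,\ldots,Z_n$ i.i.d.\ $\mathrm{Bernoulli}(p)$, a short calculation shows that each increment $Y_i-Z_i$ takes the values $-1,0,+1$ with probabilities $r/2,1-r,r/2$, so $\sum_{i=1}^n(Y_i-Z_i)\overset{d}{=}S_n^r$. Rewriting this sum as $\sum_i Y_i+\sum_i(1-Z_i)-n$ exhibits $S_n^r$ as a Standard PB (a sum of $2n$ independent Bernoullis) shifted by the integer $-n$, and hence PB in the sense of the preliminaries, with $\mu=0$ and $\sigma:=\sqrt{nr}=\sqrt{\eta/(2\pi)}$.

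With this representation in hand, Lemma~\ref{lemma: normal approximation} gives $\bigl|\sigma P(S_n^r=t)-\phi(t/\sigma)\bigr|\leq C/\sigma$ for every $t\in\mathbb Z$. For $t\in\{0,1\}$, since $\phi'(0)=0$, second-order Taylor expansion gives $\phi(t/\sigma)=\phi(0)+\mathcal O(\sigma^{-2})$; dividing through by $\sigma$ and recognising $\phi(0)/\sigma=(2\pi\sigma^2)^{-1/2}=\eta^{-1/2}$ produces the claimed estimate $P(S_n^r=t)=\eta^{-1/2}+\mathcal O(\eta^{-1})$. The one loose end is the regime where $\sigma\lesssim 1$, i.e.\ $\eta=\mathcal O(1)$, in which Lemma~\ref{lemma: normal approximation} is vacuous; there the desired bound holds trivially because both $P(S_n^r=t)\leq 1$ and $\eta^{-1/2}$ are $\mathcal O(\eta^{-1})$. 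I do not expect any serious obstacle: the only non-routine ingredient is the PB representation of $S_n^r$, which uses the hypothesis $r\leq 1/2$ in an essential way (the map $p\mapsto 2p(1-p)$ caps at $1/2$), after which everything reduces to the already-stated normal approximation and the second-order smoothness of $\phi$ at the origin.
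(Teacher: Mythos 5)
Your proposal follows the paper's proof exactly: realise the increments of $S_n^r$ as differences of i.i.d.\ Bernoullis (possible precisely because $r\leq 1/2$), conclude that $S_n^r$ is Poisson Binomial with mean $0$ and variance $nr$, and invoke Lemma~\ref{lemma: normal approximation}. You merely spell out two details the paper leaves implicit — the explicit choice of $p$ with $2p(1-p)=r$, and the Taylor step $\phi(1/\sigma)=\phi(0)+\mathcal O(\sigma^{-2})$ together with the trivial small-$\eta$ regime — so this is the same argument, correctly executed.
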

\begin{proof}
Since $r \leq \frac 1 2$, the increments $S^r_{i+1}-S^r_i$ can be realised as the difference of two i.i.d.\ Bernoulli random variables; therefore $S^r_n$ is a Poisson Binomial random variable. Applying Lemma~\ref{lemma: normal approximation} with $\mu = E[S^r_n]=0$, $\sigma^2= Var[S^r_n]=nr$ gives
\[
\left\vert \sigma P(S^r_n=t)- \frac {1}{\sqrt{2\pi}}\right\vert  = \mathcal O(\sigma^{-1}),\quad\quad t=0,1,
\]
which concludes the proof of Lemma~\ref{claim: asymptotic r<1/2}.
\end{proof}

For the first part of  Theorem~\ref{theorem:main}, the case $k\geq 3$, we have to consider $r>\frac 1 2$ as well. 
\begin{claim}\label{claim: asymptotic r>1/2}
For every $n\in\mathbb N$ and $0<r\leq 1$, let $\eta = \frac 1 2 \pi nr$. Then, 
\[
\left\vert P(S^r_n\in\{0,1\})- \eta^{-\frac 1 2}\right\vert  = \mathcal O(\eta^{-1}).
\]
\end{claim}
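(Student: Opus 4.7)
The plan is to first dispose of $r\leq 1/2$ by appealing to Claim~\ref{claim: asymptotic r<1/2}, and then to handle $r\in(1/2,1]$ via a randomly-stopped simple random walk representation. The parameter $\eta$ here equals one quarter of the $\eta$ in Claim~\ref{claim: asymptotic r<1/2}, so summing the $t=0$ and $t=1$ estimates from that claim immediately gives $P(S^r_n\in\{0,1\})=2(4\eta)^{-1/2}+\mathcal O((4\eta)^{-1})=\eta^{-1/2}+\mathcal O(\eta^{-1})$. Hence only $r\in(1/2,1]$ remains, where $S^r_n$ is no longer itself a PB random variable.

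For $r\in(1/2,1]$ I would write each increment $S^r_{i+1}-S^r_i$ as $\epsilon_i B_i$, with $\epsilon_i$ i.i.d.\ Rademacher and $B_i$ i.i.d.\ $\mathrm{Bernoulli}(r)$, all jointly independent, and set $M=\sum_{i=1}^n B_i\sim\mathrm{Bin}(n,r)$. Conditioned on $M=m$, the walk has the distribution of an ordinary simple symmetric random walk $\tilde S_m$ of length $m$, so
\[
P(S^r_n\in\{0,1\})=\E[p(M)],\qquad p(m):=P(\tilde S_m\in\{0,1\}).
\]
By Stirling, $p(m)=\binom{m}{\lceil m/2\rceil}2^{-m}=\sqrt{2/(\pi m)}+\mathcal O(m^{-3/2})$ for $m\geq 1$, while the contribution of $p(0)=1$ is only $(1-r)^n\leq 2^{-n}=\mathcal O((nr)^{-1})$ since $r>1/2$. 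To estimate $\E[p(M)]$ I would split on $\{M\geq nr/2\}$ versus its complement. The tail probability is $\leq e^{-nr/8}$ by Chernoff, so the tail contribution to $\E[p(M)]$ is $\mathcal O((nr)^{-1})$. On the main event, I Taylor-expand $f(m)=\sqrt{2/(\pi m)}$ around $nr$: using $\E[M]=nr$, $\mathrm{Var}(M)=nr(1-r)\leq nr$, and $|f''(\xi)|=\mathcal O((nr)^{-5/2})$ uniformly on $\xi\geq nr/2$, the second-order remainder is $\mathcal O((nr)^{-3/2})$, and the first-order remainder is controlled by the same Chernoff tail. Combining everything, $\E[p(M)]=\sqrt{2/(\pi nr)}+\mathcal O((nr)^{-1})=\eta^{-1/2}+\mathcal O(\eta^{-1})$.

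The main obstacle is uniformity in $r$: every constant must be absolute, valid for all $r\in(0,1]$ and all $n$. This is automatic when $nr$ is bounded (the bound $\mathcal O(\eta^{-1})$ is then trivial, since $|P(\cdot)-\eta^{-1/2}|$ is itself $\Theta(1)$), and when $nr$ is large each ingredient---the Chernoff tail, the $\{M=0\}$ correction, the first-order Taylor remainder $|f'(nr)|\cdot\E[|M-nr|;M<nr/2]$, and the Taylor second-order remainder---is $\mathcal O((nr)^{-1})$ with an absolute constant. A natural alternative would be to integrate the characteristic function representation $P(S^r_n\in\{0,1\})=\tfrac1\pi\int_{-\pi}^{\pi}(1-2r\sin^2(t/2))^n\cos^2(t/2)\,dt$ directly, but for $r>1/2$ the integrand changes sign, creating an unpleasant case analysis that the conditioning argument sidesteps.
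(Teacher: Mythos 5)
Your proof is correct and the central device is identical to the paper's: for $r\in(1/2,1]$, condition on the binomial number $M\sim\mathrm{Bin}(n,r)$ of nonzero increments, so that $S^r_n$ has the law of a simple symmetric walk of random length $M$ (the paper writes this as $S^r_n\sim S^1_X$, $X\sim\mathrm{Bin}(n,r)$). The differences are in the error bookkeeping. The paper establishes the base case $r=1$ by reusing Claim~\ref{claim: asymptotic r<1/2} via the identity $S^1_{2m}\sim 2S^{1/2}_m$ and then extends to odd $n$ through the monotonicity of $P(S^1_n\in\{0,1\})$ supplied by the reflection principle; you instead compute $p(m)=\binom{m}{\lceil m/2\rceil}2^{-m}$ directly by Stirling, which is more self-contained, folds $r=1$ into the general argument, and avoids invoking Lemma~\ref{lemma: normal approximation} in this regime. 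For the mixture estimate at general $r\in(1/2,1)$, the paper uses Chebyshev for the tail of $X$ and a first-order Lipschitz bound on $f(x,r)=(1+x/r)^{-1/2}$ closed off by Jensen, obtaining $O(n^{-1})$; you use a Chernoff tail and a second-order Taylor expansion of $\sqrt{2/(\pi m)}$ around $nr$, obtaining $O((nr)^{-3/2}+(nr)^{-1})$. Both yield $O(\eta^{-1})$ because $r>1/2$ forces $nr\asymp n$, and your treatment of the $\{M=0\}$ atom via $(1-r)^n\le 2^{-n}$ is the right point to flag, since that is precisely where the same argument would fail for small $r$.
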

\begin{proof}
The case $r\leq \frac 1 2$ is an immediate consequence of Claim~\ref{claim: asymptotic r<1/2}. The case $r=1$ and $n$ even, too,  follows from Claim~\ref{claim: asymptotic r<1/2}, since $S^1_{2m}\sim 2 S_m^{\frac 1 2}$, and $P(S^1_{2m}=1)=0$. By Lemma~\ref{lemma: reflection principle}, $P(S_n^1\in\{0,1\})$ is monotonic in $n$; therefore the claim holds for $r=1$ and $n$ odd, as well.

For $r\in\left(\frac 1 2,1\right)$, let us realise $S^r_n$ as $S^1_{X}$, where $X\sim\mathrm{Binomial}(n,r)$ independently of $(S^1_i)_{i=1}^n$. Let $C_1$ be the constant of the ``$\mathcal O$'' term in the claim for the case $r=1$. Let $f(x,r)=(1+x/r)^{-\frac 1 2}$, and $C_2=\max \{|\mathrm d f/\mathrm d x|:\frac 1 2 \leq r\leq 1 ,\ |x|\leq \frac 1 4\}$. The proof of Claim~\ref{claim: asymptotic r>1/2}, and  Theorem~\ref{theorem:main} thereby, is concluded as follows:
\begin{multline*}
    \left\vert P(S^r_n\in\{0,1\})- \eta^{-\frac 1 2}\right\vert = \left\vert P(S^1_X\in\{0,1\})- \eta^{-\frac 1 2}\right\vert \\
    \leq P(|X-nr|>n/4)\ +\\
    \sum_{k:|k-nr|\leq n/4}P(X=k)\left(\left\vert P(S_k^1\in\{0,1\})-\left(\tfrac 1 2 \pi k\right)^{-\frac 1 2}\right\vert+\left\vert\left(\tfrac 1 2 \pi k\right)^{-\frac 1 2}-\left(\tfrac 1 2 \pi nr\right)^{-\frac 1 2}\right\vert\right)
    \\ 
    \leq \frac{16 Var[X]}{n^2}+ \sum_{k:|k-nr|\leq n/4}P(X=k)\left( C_1\frac 2 {\pi k}
    +\eta^{-\frac 1 2}\left\vert f(k/n-r,r)-f(0,r)\right\vert\right)\\
    \leq \frac{4}{n} +  \frac{C_1 8}{\pi n}+ \eta^{-\frac 1 2}C_2\sum_{k:|k-nr|\leq n/4}P(X=k)\left\vert k/n-r\right\vert\\
    \leq  \mathcal O (n^{-1}) + C_2\eta^{-\frac 1 2} E\left\vert \tfrac X n -r\right\vert\\
    \leq \mathcal O (n^{-1}) +C_2\eta^{-\frac 1 2}\sqrt{Var\left[\tfrac X n\right]} = \mathcal O (n^{-1}).
\end{multline*}
The second and the last inequalities use Chebyshev's and Jensen's Inequalities respectively, and $Var[X]=nr(1-r)\leq n/4$.
\end{proof}

\subsection{Proof of Theorem \ref{theorem: approximate nash}}
Let $g$ be an $n$-player $k$-action anonymous game. Every $\epsilon$-Nash equilibrium in $g^\delta$ is a $(\delta+\epsilon)$-Nash in $g$; therefore the first part of Theorem~\ref{theorem: approximate nash} is an immediate consequence of the following theorem.
\begin{thm}[Theorem 6.1 in ~\cite{azrieli-shmaya13}]
Any $n$-player $k$-action $\lambda$-Lipschitz game admits a $2k\lambda$-Nash equilibrium in pure strategies.
\end{thm}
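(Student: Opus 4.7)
For the first part of Theorem~\ref{theorem: approximate nash} I would follow the one-line deduction the paper sketches: by the very definition of $\lambda(n,k,\delta)$ as the worst-case Lipschitz constant, the perturbed game $g^\delta$ is a $\lambda(n,k,\delta)$-Lipschitz $n$-player $k$-action game, so the quoted Azrieli-Shmaya theorem yields a pure profile $\sigma\in[k]^n$ that is a $2k\lambda(n,k,\delta)$-Nash equilibrium of $g^\delta$. The substantive step is to show that the same $\sigma$, interpreted in the original game $g$ as the $\delta$-perturbed pure profile $(\sigma_1^\delta,\ldots,\sigma_n^\delta)$, is a $(\delta+2k\lambda(n,k,\delta))$-Nash equilibrium of $g$; this is the ``every $\epsilon$-NE in $g^\delta$ is a $(\delta+\epsilon)$-NE in $g$'' remark used in the excerpt.

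To verify that remark in general I would compute, for a pure $\epsilon_0$-NE $\sigma$ of $g^\delta$ and any pure deviation $\tau_i$ by player $i$ in $g$,
\[
E[g_i(\tau_i,\sigma_{-i}^\delta)]-g^\delta_i(\tau_i,\sigma_{-i})=\delta\bigl(E[g_i(\tau_i,\sigma_{-i}^\delta)]-E[g_i(u,\sigma_{-i}^\delta)]\bigr),
\]
using $\tau_i^\delta=(1-\delta)\tau_i+\delta u$ with $u\sim\mathrm{Uniform}([k])$; this quantity is bounded in absolute value by $\delta$ since payoffs lie in $[0,1]$. Combining it with the $\epsilon_0$-NE property $g^\delta_i(\tau_i,\sigma_{-i})\le g^\delta_i(\sigma)+\epsilon_0$ and the on-path identity $g^\delta_i(\sigma)=E[g_i(\sigma_1^\delta,\ldots,\sigma_n^\delta)]$, and observing that mixed deviations in $g$ reduce to pure ones by convexity, gives the claimed $(\delta+\epsilon_0)$-NE property.

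For the second part I would optimise $\delta$ in the bound $\epsilon\ge\delta+2k\lambda(n,k,\delta)$ using the asymptotic of Theorem~\ref{theorem:main}, which upper-bounds $\lambda(n,k,\delta)=\mathcal O(\sqrt{k/(n\delta)})$ whenever $n\delta/k\to\infty$. Plugging this in gives $\epsilon=\mathcal O(\delta+k^{3/2}(n\delta)^{-1/2})$; the two summands are balanced by the choice $\delta(n,k):=kn^{-1/3}$, for which $n\delta/k=n^{2/3}\to\infty$ (so the asymptotic is in force) and both $\delta(n,k)$ and $\epsilon(n,k):=\delta(n,k)+2k\lambda(n,k,\delta(n,k))$ are $\mathcal O(kn^{-1/3})$. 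Since the main work is absorbed in the already-proved characterisation, the only delicate point is checking that the chosen $\delta(n,k)$ lies in the regime where the asymptotic upper bound holds uniformly; with the reflection-principle formulation of $\lambda$ from Lemma~\ref{lemma: reflection} combined with Claim~\ref{claim: asymptotic r>1/2}, this is immediate, and the remainder is the elementary one-variable optimisation just carried out.
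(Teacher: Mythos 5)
The statement you were asked to prove is the purification theorem itself: that \emph{every} $n$-player $k$-action $\lambda$-Lipschitz game (not necessarily anonymous, not necessarily a perturbed game) admits a pure strategy profile that is a $2k\lambda$-Nash equilibrium. Your proposal never engages with this claim. It invokes it as a black box (``the quoted Azrieli--Shmaya theorem yields a pure profile $\sigma$'') and then carries out the downstream deduction of Theorem~\ref{theorem: approximate nash}. That downstream deduction is correct and in fact mirrors the paper's own one-paragraph proof of Theorem~\ref{theorem: approximate nash}: the paper likewise observes that an $\epsilon$-equilibrium of $g^\delta$ is a $(\delta+\epsilon)$-equilibrium of $g$ and balances $\delta$ against $2k\lambda(n,k,\delta)$ via Theorem~\ref{theorem:main}, arriving at the same $\mathcal O(kn^{-1/3})$ rate (your explicit choice $\delta=kn^{-1/3}$ is the solution of the paper's fixed-point prescription $\delta=\lambda(n,k,\delta)$ up to constants). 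But with respect to the target statement the argument is circular: the existence of the pure $2k\lambda$-equilibrium is precisely what is to be shown, and nothing in your write-up addresses it.

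A proof of the statement requires a genuinely different, probabilistic purification argument: start from an exact mixed Nash equilibrium $\sigma$ (which exists by Nash's theorem), sample a pure profile $a$ with $a_i\sim\sigma_i$ independently across players, and show that with positive probability every player's regret against each of his $k$ pure deviations is at most $2k\lambda$; the Lipschitz hypothesis is what controls the fluctuation of $g_i(j,a_{-i})$ around $g_i(j,\sigma_{-i})$ as the coordinates of $a_{-i}$ are resampled one at a time, and the union over the $k$ deviations accounts for the factor $k$. The paper itself supplies no proof of this statement; it imports it verbatim as Theorem 6.1 of \cite{azrieli-shmaya13}. So the honest options were either to reproduce that argument or to state explicitly that the result is quoted; assuming it and proving its corollary does neither. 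Note also that your reduction only ever applies the theorem to the specific games $g^\delta$ with $g$ anonymous, whereas the statement asserts the conclusion for arbitrary $\lambda$-Lipschitz games, so any argument leaning on anonymity or the perturbation structure would not establish it in the stated generality.
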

The second part of Theorem~\ref{theorem: approximate nash} follows from Theorem~\ref{theorem:main}, and setting $\delta=\lambda(n,k,\delta)$, and $\epsilon=2\delta$.
\section{Acknowledgement}
We are grateful to \href{https://mathoverflow.net/}{MathOverflow} for facilitating this collaboration.
We thank the members of the Facebook group \url{https://www.facebook.com/groups/305092099620459/} for pointing out references related to the reflection principle. 
\bibliographystyle{plain}
\bibliography{mybib}
\end{document}